\documentclass[twocolumn]{autart}    %

\usepackage{graphicx}

\usepackage{subcaption}        %
\usepackage{float}             %
\usepackage{tikz}
\usepackage{amsmath}
\usetikzlibrary{positioning, arrows.meta, calc, fit, backgrounds}

\usepackage{mathptmx}          %
\usepackage{amsfonts}          %
\DeclareMathAlphabet{\mathcal}{OMS}{cmsy}{m}{n} %

\usepackage{mathtools}         %
\DeclarePairedDelimiter{\norm}{\lVert}{\rVert}
\usepackage{amssymb}           %
\usepackage{bm}                %

\usepackage{booktabs}          %
\usepackage{multirow}          %
\usepackage[table]{xcolor}     %
\newcommand{\bestres}{green!15} %
\usepackage{makecell}

\usepackage[inline]{enumitem}  %

\usepackage{algorithmic}       %

\usepackage{textcomp}          %
\usepackage{siunitx}           %
\newtheorem{remark}{Remark}
\newtheorem{example}{Example}
\newtheorem{theorem}{Theorem}
\newenvironment{proof}[1][Proof]{\noindent\textbf{#1} }{\qed\par\vspace{1ex}}
\newtheorem{definition}{Definition}
\newtheorem{proposition}{Proposition}
\newtheorem{corollary}{Corollary}
\newtheorem{lemma}{Lemma}
\newtheorem{assumption}{Assumption}
\renewcommand{\qed}{\unskip\nobreak\hfill $\square$} %

\usepackage[draft]{hyperref}   %
\newcommand{\aref}[1]{App.~\ref*{#1}} %

\newcommand{\R}{\mathbb{R}}

\newcommand{\params}{\theta}                    %
\newcommand{\dimParams}{{d_\theta}}             %
\newcommand{\trueOp}{\mathbf{S}}            %
\newcommand{\trueSys}{\mathbf{G}}           %
\newcommand{\trueSysFunc}[1]{G_{#1}}            %
\newcommand{\cont}{\mathbf{K}}              %
\newcommand{\contFunc}[1]{K_{#1}}               %
\newcommand{\approxOp}{\hat{\trueOp}}           %
\newcommand{\approxOpFunc}[1]{\hat{S}_{#1}}     %
\newcommand{\approxSys}{\hat{\trueSys}}         %

\newcommand{\trueIn}[1]{u_{#1}}                 %
\newcommand{\trueInMat}{\mathbf{u}}             %
\newcommand{\trueInData}{\{\trueIn{0:\horizon-1}^\counter\}_{\counter=0}^{\numSamples-1}}
\newcommand{\trueNoisyOut}[1]{y_{#1}}

\newcommand{\trueNoisyOutMat}{\mathbf{y}}

\newcommand{\exc}[1]{r_{#1}}
\newcommand{\excMat}{\mathbf{r}}

\newcommand{\noise}[1]{v_{#1}}
\newcommand{\noiseMat}{\mathbf{v}}

\newcommand{\trueState}[1]{x_{#1}}
\newcommand{\trueStateMat}{\mathbf{x}}

\newcommand{\approxNoisyOut}[1]{\hat{y}_{#1}}
\newcommand{\approxNoisyOutMat}{\hat{\mathbf{y}}}

\newcommand{\whatMat}{\hat{\bm{\omega}}}

\newcommand{\approxNoiseMat}{\mathbf{v}}

\newcommand{\excVec}[1]{r_{#1}}
\newcommand{\noiseVec}[1]{v_{#1}}
\newcommand{\whatVec}[1]{\hat{\omega}_{#1}}

\newcommand{\upOp}{\bm{\Upsilon}}
\newcommand{\upOpKnot}{\bm{\Upsilon^o}}
\newcommand{\upFunc}[1]{\Upsilon_{#1}}
\newcommand{\upKnotFunc}[1]{\Upsilon^o_{#1}}
\newcommand{\eyeOp}{\mathbf{I}}
\newcommand{\aMat}{\mathbf{a}}
\newcommand{\bMat}{\mathbf{b}}
\newcommand{\aVec}[1]{a_{#1}}
\newcommand{\bVec}[1]{b_{#1}}

\newcommand{\dimIn}{{d_u}}
\newcommand{\dimOut}{{d_y}}
\newcommand{\dimState}{{d_x}}

\newcommand{\counter}{n}
\newcommand{\numSamples}{N}
\newcommand{\horizon}{T}
\newcommand{\dataset}{\mathbb{S}}
\newcommand{\realdataset}{\mathcal{D}}

\newcommand{\eye}{I}
\newcommand{\zeroMat}{\mathbf{0}}

\DeclareMathOperator*{\argmin}{arg\,min}
\DeclareMathOperator*{\E}{\mathbb{E}}

\renewcommand{\aref}[1]{\hyperref[#1]{App.~\ref*{#1}}}

\newcommand{\lp}{\ell_p}
\newcommand{\Lp}{\mathcal{L}_p}

\definecolor{darkgray}{HTML}{314354}

\DeclareMathAlphabet\mathbfcal{OMS}{cmsy}{b}{n}

\newcommand{\thetaset}{\Theta}

\renewcommand{\P}{\mathcal{P}}
\newcommand{\Q}{\mathcal{Q}}

\DeclareMathOperator*{\KL}{\mathrm{KL}}
\newcommand{\contextMat}{\boldsymbol{\zeta}}
\newcommand{\dimContext}{{d_\zeta}}

\newcommand{\CL}[2]{\mathrm{CL}[#1,#2]}
\newcommand{\stabset}{\mathcal{G}_{\cont}}       %
\newcommand{\approxSysICI}{\hat{\trueSys}_{\text{ICI}}}   %
\DeclareMathOperator{\sech}{sech}

\begin{document}

\begin{frontmatter}

\title{Neural Closed-Loop Identification of Nonlinear Systems with Stability and Generalization Guarantees\thanksref{footnoteinfo}} %

\thanks[footnoteinfo]{This work was supported as a part of NCCR Automation, a National Centre of Competence in Research, funded by the Swiss National Science Foundation (grant number 51NF40\_225155) and the NECON project (grant number 200021\_219431)}

\author[EPFL]{Laura Meroi}\ead{laura.meroi@epfl.ch},    %
\author[EPFL]{Mahrokh G. Boroujeni}\ead{mahrokh.ghoddousiboroujeni@epfl.ch},               %
\author[EPFL]{Leonardo Massai}\ead{l.massai@epfl.ch},
\author[EPFL]{Giancarlo Ferrari-Trecate}\ead{giancarlo.ferraritrecate@epfl.ch}  %

\address[EPFL]{Institute of Mechanical Engineering, EPFL, Switzerland.}  %

\begin{keyword}
Closed-loop identification; Nonlinear system identification; Neural networks;
Stability of nonlinear systems; Probabilistic and Bayesian methods for system identification.
\end{keyword}                          %

\begin{abstract}                          %

Neural networks are expressive models of nonlinear dynamics, but identifying such systems from closed-loop data raises two difficulties: the feedback loop correlates inputs with measurement noise, biasing standard open-loop estimators, and the identified model must be stabilized by the data-collection controller, so that its closed-loop behavior matches the true plant's. We derive a complete parameterization
of all nonlinear systems stabilized by a given controller, generalizing the
classical dual-Youla parameterization to the nonlinear setting. Building on
this characterization, we propose Internal Controller Identification (ICI), a
computationally tractable framework for identifying potentially unstable
systems while ensuring closed-loop stability by construction; in the linear case, our indirect method reduces to
an open-loop identification problem, for which standard consistency results
apply. The framework accommodates contextual information, letting a single certified model identify a family of systems, and supports a PAC-Bayesian training objective that, when data are scarce and prior knowledge is available, bounds the identified operator's expected prediction loss. Experiments show that ICI identifies open-loop unstable plants where uncertified baselines
diverge, outperforms direct methods under weak excitation and
feedback-correlated noise, generalizes zero-shot to environments unseen
during training, and, with limited data, avoids the overfitting of
empirical risk minimization while providing a non-vacuous generalization bound.

\end{abstract}

\end{frontmatter}

\section{Introduction}\label{sec:intro}

Neural networks are flexible, expressive models for learning nonlinear dynamical systems from input-output data~\cite{Sjoberg99NN,Pillonetto25NN}. However, open-loop data collection is often impractical or unsafe: impossible for unstable systems, which can't be run without stabilizing feedback, and undesirable for stable ones that operate under feedback for safety, setpoint tracking, or disturbance rejection (e.g., drones or robotic manipulators). In these scenarios, data are gathered while the system operates in closed loop with a stabilizing controller, motivating closed-loop system identification~\cite{Vandenhof98CLissues,Forssell99Revisited,Maruta21ASF,Jones22nonasymp}. Compared to the open-loop case, the central difficulty of this setting is the feedback-induced correlation between input signals and measurement noise~\cite{Vandenhof98CLissues,Forssell99Revisited}.
Classical methods for closed-loop identification are typically classified into \textit{direct} and \textit{indirect} approaches~\cite{Soderstrom88SysID}.

\textit{Direct methods} estimate the system and noise model independently of the feedback; in the linear case they are consistent provided the noise model is accurate, an assumption often violated by the complex disturbances of real-world systems~\cite{Vandenhof98CLissues,Forssell99Revisited}. 

\textit{Indirect methods} instead exploit knowledge of the controller and target the closed-loop mapping, eliminating the noise contribution at the plant input and reducing the problem to open-loop identification~\cite{Vandenhof98CLissues,Forssell99Revisited,Ljung99Alternative}; in the linear case this yields consistency even under a misspecified noise model, and it is the route we follow.

A third family, \textit{joint input-output} methods, simultaneously estimates the plant and controller from data jointly modeled as the output of a white-noise-driven system~\cite{Soderstrom88SysID}; however, these methods are largely confined to the linear setting, do not extend naturally to general nonlinear plants, and additionally exhibit theoretically unbounded variance in the small-noise regime~\cite{Abdalmoaty24SmallNoise}.

For LTI systems, one commonly used framework for indirect identification is the dual-Youla parameterization. This formulation is constructed using coprime factorizations of the known controller and of the nominal plant stabilized by it, augmented by a Youla parameter that is optimized to minimize the discrepancy from the true system. Through this structure, the parameterization characterizes the complete set of all plants stabilized by the given controller~\cite{Hensen89DualYoula}; as its name suggests, it is dual to the Youla-Kučera parameterization, which instead characterizes all controllers that stabilize a given plant~\cite{Youla76Youla}. A fundamental advantage of this approach is that any identified model is inherently stabilized by the controller, naturally preserving closed-loop stability properties observed in the true system~\cite{Forssell99Revisited}. Furthermore, in the linear case, the Youla parameter reduces to a stable rational transfer function; since the underlying function space is infinite-dimensional, it is truncated to a finite order in practice, yielding a tractable optimization.

Extensions to nonlinear plants under linear control exist (see~\cite{anderson1998youla} for a review, and~\cite{dasgupta1996parametrization,fujimoto2001kernel} for coprime-factorization parameterizations) but face two obstacles: coprime factorizations exist only for subclasses of nonlinear systems and are hard to compute~\cite{anderson1998youla}, and the parameterization requires optimizing over stable nonlinear operators and their inverses, which complicates the optimization, a difficulty also noted for Youla-based nonlinear controller learning~\cite{outputSLS}. Existing methods thus fall short of a computationally practical framework.

More recently,~\cite{Maruta21ASF} proposed an alternative approach for closed-loop identification of nonlinear systems. While effective in practice, this method has two key limitations: 
\begin{enumerate*}[label=(\roman*)] 
    \item the closed-loop system of the learned model and controller may become unstable, and 
    \item it assumes a predefined parametric form for the system, limiting flexibility. 
\end{enumerate*}
This leaves closed-loop identification of expressive, nonlinear models a critical, yet not sufficiently explored, challenge.

A further challenge is that real-world systems rarely operate in a single
environment: friction,
payload, or terrain vary, so the data are generated by a family of related plants. Identification across operating conditions is classically
addressed by linear parameter-varying (LPV) methods, also in closed
loop~\cite{van2009subspace,toth2010modeling}, but restricted to LPV model
classes without closed-loop stability guarantees; meta- and in-context
learning estimate a single meta-model for a class of
systems~\cite{forgione2023system,piga2024adaptation}, again without
certificates.

A distinct concern is finite-sample reliability: closed-loop data
collection is costly, safety-critical, and biased by the controller, so
guarantees from limited data are valuable. Such guarantees can be obtained
through PAC-Bayes theory~\cite{mcallester1999some,userfriendly}, a powerful
tool for certifying highly parameterized models, applied beyond supervised
learning to meta-learning~\cite{PACOH} and optimal
control~\cite{Boroujeni24PAC}. PAC-Bayesian bounds exist for open-loop
identification of stochastic LTI
systems~\cite{eringis2021pac,eringis2024pac} and stable recurrent
models~\cite{eringis2023pac}; their use in closed-loop
identification, under strict closed-loop stability constraints on the
learned model, remains unexplored.

\textbf{Contributions.}
We propose a computationally tractable framework for closed-loop identification of nonlinear, potentially open-loop unstable systems while guaranteeing stability of the interconnection with the data-collection controller. 
In particular:
\begin{enumerate*}[label=(\roman*)]
\item we introduce the Internal Controller Identification (ICI)
framework, a complete parameterization of all and only the nonlinear systems stabilized by a given controller, which generalizes the dual-Youla
parameterization, guarantees closed-loop stability by construction, and reduces to a consistent open-loop problem in the linear case;
\item we derive finite-sample generalization guarantees, introducing an informative prior over the parameters of the learnable operator and a tractable PAC-Bayesian bound on its expected prediction loss, optimized with
Stein Variational Gradient Descent (SVGD);
\item we develop context-aware identification, where bounded context wrappers let a single model identify an entire family of environments and generalize to members never observed during training, with the closed-loop stability certificate intact;
\item we conduct a systematic experimental evaluation covering open-loop unstable plants, weak excitation, feedback-correlated noise, and partial observation, against closed-loop baselines.
\end{enumerate*}

A preliminary version of this work appeared in the conference
paper~\cite{Boroujeni2025Neural}, which introduced the base ICI parameterization and its stability guarantee. The present paper adds the context-aware formulation (Section~\ref{sec:context_ici}), the PAC-Bayesian
generalization framework and its SVGD-based training
(Section~\ref{sec:PAC}), the four-experiment study
(Section~\ref{sec:experiments}), and complete proofs.

\subsection{Notation}
The set of all sequences $\mathbf{x} = (x_0,x_1,x_2,\ldots)$, where $x_t \in \mathbb{R}^\dimState$, $t\in \mathbb{N}$, is denoted as $\ell^\dimState$. 
Moreover,  $\mathbf{x}$ belongs to $\ell_p^\dimState \subset \ell^\dimState$ with $p \geq 1$ if $\norm{\mathbf{x}}_p = \left(\sum_{t=0}^\infty |x_t|^p\right)^{\frac{1}{p}} < \infty$ and $\mathbf{x} \in \ell^\dimState_\infty$ if $\operatorname{sup}_{t}|x_t|< \infty$, where $|\cdot|$ denotes any vector norm. 
We use the notation $x_{i:j}$ to refer to the truncation of $\mathbf{x}$ to the finite-dimensional vector $(x_i,x_{i+1},\ldots,x_{j})$ for $j \geq i$, and $\trueState{i:j} =\varnothing$ otherwise.
An operator $\upOp: \ell^\dimState \to \ell^\dimOut$ is a mapping between two sequences and it is \textit{causal} if 
$\upOp(\trueStateMat) = \bigl(\upFunc{0}(\trueState{0}), \dots, \upFunc{t}(\trueState{0:t}), \dots\bigr)$, and \textit{strictly causal} if $\upOp(\trueStateMat) = \bigl(\upFunc{0}(\varnothing), \dots, \upFunc{t}(\trueState{0:t-1}), \dots\bigr)$, where $\upFunc{0}(\varnothing)$ is a vector in $\R^\dimOut$.
A causal operator $\upOp$ is:
\begin{itemize}
    \item $\ell_p$-stable ($\upOp \in \Lp$) if $\upOp(\trueStateMat) \in \lp^\dimOut$ for all $\trueStateMat \in \lp^\dimState$.
    \item incrementally $\ell_p$-stable if $\exists \, \gamma \in [0, \infty)$ such that $\forall \, \trueStateMat_1, \trueStateMat_2 \in \lp^\dimState$, 
        $\Vert \upOp(\trueStateMat_1) - \upOp(\trueStateMat_2)\Vert_p \leq \gamma \, \Vert\trueStateMat_1 - \trueStateMat_2\Vert_p$,
    where $\gamma$ is the incremental finite gain (i.f.g.) of $\upOp$.
\end{itemize}
We define $\upOp_{i:j}(x_{0:j}) =  (\upOp_i(x_{0:i}),\upOp_{i+1}(x_{0:i+1}),\ldots,\upOp_j(x_{0:j}))$; for strictly causal $\upOp$, $\upOp_{i:j}$ acts on $x_{0:j-1}$. 
The sets of causal and strictly causal $\Lp$-stable operators are denoted $\Lp^C$ and $\Lp^{SC}$, respectively.

\section{Problem formulation} \label{sec:problem_formulation}
\subsection{Closed-loop system dynamics}
    We consider a discrete-time dynamical system described by the input-output model:
    \begin{align} 
        \begin{cases} 
            y_0 = G_0(\varnothing) + v_0, 
            \\ 
            y_t = G_t(u_{0:t-1}) + v_t, 
            \qquad t \in \mathbb{N},  
        \end{cases} \label{eq:true_sys}
    \end{align}
    where $G_0(\varnothing) \in \mathbb{R}^{d_y}$ is the nominal initial output. 
    At each time $t \in \mathbb{N}\cup \{0\}$, $u_t \in \mathbb{R}^{d_u}$ denotes the input, $y_t \in \mathbb{R}^{d_y}$ is the output, and $v_t \in \mathbb{R}^{d_y}$ is the measurement noise corrupting the output. 
    The dynamics $G_t$ may be time-varying, nonlinear, and potentially unstable; by~\eqref{eq:true_sys}, $y_t$ depends only on $u_{0:t-1}$, i.e., the system is strictly causal. In operator notation, system~\eqref{eq:true_sys} reads:
\begin{align}
    \trueNoisyOutMat = \trueSys (\trueInMat) + \noiseMat, \label{eq:true_sys_op}
\end{align}
where $\trueSys$ is a strictly causal operator such that $\trueSys (\trueInMat) = \bigl( \trueSysFunc{0} (\varnothing), \trueSysFunc{1} (\trueIn{0}), \dots\bigr)$, $\trueInMat = (\trueIn{0}, \trueIn{1}, \dots)$ and $\noiseMat$, $\trueNoisyOutMat$ are defined in a similar way. The next example shows the relation between~\eqref{eq:true_sys} and state-space models.
\begin{example}[From state-space to input-output models]
\label{ex:nonlin}
Consider a strictly causal, time-invariant system in the state-space form:
\begin{align}
    \begin{cases} 
        \trueState{t+1} = f(\trueState{t}, \trueIn{t}) \,, \quad \trueState{0} = \bar{\trueState{}},\\
        \trueNoisyOut{t} = h(\trueState{t}) + \noise{t},
    \end{cases}\label{eq:state_space}
\end{align}
where $\trueState{t} \in \R^\dimState$ is the state at time $t$ and $\bar{\trueState{}} \in \R^\dimState$ is the initial state.
By recursion, the functions $\trueSysFunc{t}$ in~\eqref{eq:true_sys} are defined as: 
   \begin{align*}
        \trueSysFunc{0}(\varnothing) &= h \bigl(\bar{\trueState{}}\bigr) , \; \trueSysFunc{1}(\trueIn{0}) = h \bigl( f (\bar{\trueState{}}, \trueIn{0})\bigr) , \; \dots\\
        \trueSysFunc{t}(\trueIn{0:t-1}) &= h \bigl(\underbrace{ f ( \dots ( f }_{t \textit{ times}}(\bar{\trueState{}}, \trueIn{0}), \trueIn{1}), \dots, \trueIn{t-1})\bigr) \,, \; 
        \dots \;.
    \end{align*} 
Note that $\trueSysFunc{t}$ depends on the initial state $\bar{\trueState{}}$ for all $t$, consistent with input-output models in the literature~\cite{paice1996tac}. 
When the system~\eqref{eq:state_space} is LTI and $\bar{\trueState{}} = 0$, $\trueSys$ reduces to a linear time-invariant convolution operator, equivalently described in the $\mathcal{Z}$-domain by the strictly proper transfer matrix $\trueSys(z)$.
\end{example}

Next, we introduce the closed-loop system, illustrated in Figure~\ref{fig:true_CL} and described by:
\begin{align} 
    \CL{\trueSys}{\cont} : \quad
    \begin{cases} 
        \trueNoisyOutMat = \trueSys (\trueInMat) + \noiseMat,
        \\ 
        \trueInMat =  \cont(\trueNoisyOutMat) +\excMat ,  
    \end{cases} \label{eq:true_cl}
\end{align}
where $\cont$ is a causal, time-varying dynamical output-feedback controller and $\mathbf{r} = (r_0, r_1, \dots)$, $\exc{t} \in \R^\dimIn$, is an external excitation injected in addition to the feedback. In~\eqref{eq:true_cl}, $\CL{\trueSys}{\cont}$ denotes the operator $(\excMat, \noiseMat) \mapsto (\trueInMat, \trueNoisyOutMat)$ that the two equations define implicitly, i.e., the closed loop of Figure~\ref{fig:true_CL} seen as a single system mapping the exogenous signals to the internal ones. The same notation is used for any strictly causal system in place of $\trueSys$.

\begin{figure}[htbp]
    \centering
    \begin{tikzpicture}[
        auto,
        block/.style = {draw=gray!90, rectangle, minimum height=0.85cm, minimum width=0.85cm, align=center, line width=1.2pt, fill=gray!15},
        sum/.style = {draw, circle, minimum size=0.45cm, inner sep=0pt, line width=0.8pt},
        line/.style = {draw, -{Latex[length=1.5mm, width=1.5mm]}, line width=0.8pt},
        font=\normalsize\bfseries
    ]

    \node[coordinate] (input) {};
    \node[sum, right=0.7cm of input] (sum1) {};
    
    \node[block, right=1cm of sum1] (G) {G};
    \node[sum, right=1cm of G] (sum2) {};
    \node[coordinate, above=0.8cm of sum2] (dist) {};
    \node[coordinate, right=1cm of sum2] (output) {};
    
    \node[block, below=0.8cm of G] (K) {K};

    \node[font=\tiny] at ([xshift=-0.12cm]sum1.center) {$+$};
    \node[font=\tiny] at ([yshift=-0.12cm]sum1.center) {$+$};

    \node[font=\tiny] at ([xshift=-0.12cm]sum2.center) {$+$};
    \node[font=\tiny] at ([yshift=0.12cm]sum2.center) {$+$};

    \path[line] (input) -- node[above] {r} (sum1.west);
    \path[line] (sum1.east) -- node[above] {u} (G.west);
    \path[line] (G.east) -- (sum2.west);
    
    \path[line] (dist) -- (sum2.north);
    \node[above] at (dist) {v};
    
    \path[line] (sum2.east) -- node[above, near end] {y} (output);

    \coordinate (tap) at ($(sum2.east) + (0.5cm,0)$); 
    \path[line] (tap) |- (K.east); 
    \path[line] (K.west) -| (sum1.south); 

    \end{tikzpicture}
    \caption{Block diagram of the closed-loop system formed by system $\trueSys$ and controller $\cont$, with system input $\trueInMat$, noisy output $\trueNoisyOutMat$, excitation signal $\excMat$, and noise $\noiseMat$.}
    \label{fig:true_CL}
\end{figure}

As discussed in Section~\ref{sec:intro}, data are collected while the system operates in closed loop. To ensure reliable data collection and prevent unbounded measurements, the closed-loop system must satisfy an appropriate notion of stability. 

\begin{definition}[Closed-loop $\ell_p$-stability~\cite{PerfBoost}]\label{def:stability}

    Let $\trueSys$ be a strictly causal operator and $\cont$ a causal operator. The closed loop of $\trueSys$ and $\cont$ in~\eqref{eq:true_cl} is $\ell_p$-stable if $\trueInMat \in \lp$ and $\trueNoisyOutMat \in \lp$ for every $\excMat \in \lp$ and every $\noiseMat \in \lp$. Equivalently, we write $\CL{\trueSys}{\cont} \in \Lp$. We denote by $\stabset := \{\, \trueSys \text{ strictly causal} : \CL{\trueSys}{\cont} \in \Lp \,\}$ the set of systems stabilized by $\cont$.
\end{definition}

Definition~\ref{def:stability} constrains only the interconnection: neither $\trueSys$ nor $\cont$ is required to belong to $\Lp$ individually. When $\trueSys$ and $\cont$ are LTI, Definition~\ref{def:stability} recovers the standard notion: $\CL{\trueSys}{\cont} \in \Lp$ for every $p \geq 1$ if and only if the transfer matrix describing $\CL{\trueSys}{\cont}$ is asymptotically stable.

\subsection{Identification Objective}
Our goal is to learn a dynamical model $\approxSys$ that best approximates the true system $\trueSys$ using closed-loop data collected with a known stabilizing controller $\cont$. 
While $\approxSys$ may be open-loop unstable, capturing potential plant instability, we enforce $\CL{\approxSys}{\cont} \in \Lp$, i.e., we constrain the search to $\stabset$.

The stochasticity of the closed-loop data stems from the unobserved measurement noise $\mathbf{v}$ and from the excitation $\excMat$, actively sampled from a user-designed distribution to make the input informative. Since the system operates under feedback, the control input $\mathbf{u}$ is a deterministic function of  the measured signals $(\excMat,\mathbf{y})$ through the closed-loop equations~\eqref{eq:true_cl}. Consequently, a data sample from a given experiment is the fully observable, infinite-length trajectory $\bm{\tau} = (\mathbf{r},\mathbf{u},\mathbf{y})$, which is drawn from the underlying closed-loop data distribution $\mathcal{D}$ induced by the distributions of $\excMat$ and $\mathbf{v}$.

Ideally, we seek to minimize an expected identification loss $c$ over $\mathcal{D}$, as follows:
\begin{align} 
    \min_{\approxSys}\;& \mathbb{E}_{\bm{\tau}\sim \mathcal{D}} 
    \; c(\approxSys,\bm{\tau})
    \label{eq:ideal_objective} \\
    \text{s.t.} \;& 
    \CL{\approxSys}{\cont} \in \Lp , \nonumber 
\end{align}
where $c\big(\hat{\mathbf{G}},\bm{\tau}\big)$ quantifies the mismatch between the
predicted output $\approxNoisyOutMat$ and the measured $\mathbf{y}$.
Importantly, \eqref{eq:ideal_objective} does not commit to how $\approxNoisyOutMat$ is generated from $\bm{\tau}$: this choice, deferred to
Section~\ref{sec:ERM_framework}, gives rise to the direct and indirect
strategies.

Directly solving the optimization problem in~\eqref{eq:ideal_objective} is impractical due to three fundamental challenges:
\begin{enumerate}[label=(C\arabic*)]
    \item \label{challenge:constraints} \textbf{Complex constraints:} The set $\stabset$ is generally non-convex and difficult to parameterize directly.
    \item \label{challenge:horizon} \textbf{Intractable horizon:} Evaluating the objective over infinite-length trajectories is computationally infeasible.
    \item \label{challenge:distribution} \textbf{Unknown distribution:} The true data distribution $\mathcal{D}$ is unknown, preventing the direct computation of the expectation.
\end{enumerate}

In the following, we address these challenges systematically. We first focus on the stability constraint~\ref{challenge:constraints} via the ICI framework described in Section~\ref{subsec:ici}. Subsequently in Section~\ref{sec:ERM_framework} we tackle challenges~\ref{challenge:horizon} and ~\ref{challenge:distribution}, by adopting an Empirical Risk Minimization (ERM) approach defined over finite-length trajectories.

\section{Internal Controller Identification (ICI) framework}\label{subsec:ici}
\subsection{Parameterization of Stabilized Systems}
Problem~\eqref{eq:ideal_objective} requires learning over $\stabset$, a constraint that is hard to enforce directly because this set is generally non-convex. We parameterize \emph{all and only} such systems through a dual formulation of Internal Model Control (IMC) architectures~\cite{IMC}, which we call the Internal Controller Identification (ICI) framework. Specifically, $\approxSysICI$ is the negative feedback interconnection in Figure~\ref{fig:Ghat} of a copy of the controller $\cont$ and a trainable strictly causal operator $\approxOp \in \Lp^{SC}$, so that, for an input $\hat{\trueInMat}$, the predicted output satisfies
\begin{equation}
    \approxNoisyOutMat = \approxSysICI(\hat{\trueInMat}) = \approxOp \bigl( \hat{\trueInMat} - \cont (\approxNoisyOutMat) \bigr). \label{eq:ici}
\end{equation}
This parameterization transforms the problem of learning $\approxSysICI$ into optimizing $\approxOp$ over the set $\Lp^{SC}$~\cite{outputSLS,PerfBoost}.

\begin{figure}[htbp]
    \centering
    \begin{tikzpicture}[
        auto,
        block/.style = {draw, rectangle, minimum height=0.85cm, minimum width=0.85cm, align=center, line width=0.8pt, fill=pink!40},
        sum/.style = {draw, circle, minimum size=0.45cm, inner sep=0pt, line width=0.8pt, fill=white},
        line/.style = {draw, -{Latex[length=1.5mm, width=1.5mm]}, line width=0.8pt},
        font=\normalsize\bfseries
    ]

    \node[sum] (sum1) at (0, 0) {};
    \node[block] (Shat) at (1.4, 0) {$\mathbf{\hat{S}}$};
    \node[block] (K) at (1.4, -1.1) {$\mathbf{K}$}; 
    
    \coordinate (tap) at (2.8, 0);

    \node[coordinate] (input) at (-0.8, 0) {};
    \node[coordinate] (output) at (3.6, 0) {};

    \node[font=\tiny] at ([xshift=-0.12cm]sum1.center) {$+$};
    \node[font=\tiny] at ([yshift=-0.12cm]sum1.center) {$-$};

    \path[line] (input) -- node[above] {$\mathbf{\hat{u}}$} (sum1.west);
    \path[line] (sum1.east) -- (Shat.west);
    
    \path[draw, line width=0.8pt] (Shat.east) -- (tap); 
    \path[line] (tap) -- node[above] {$\mathbf{\hat{y}}$} (output); 
    
    \path[line] (tap) |- (K.east); 
    \path[line] (K.west) -| (sum1.south); 

    \begin{scope}[on background layer]
        \fill[gray!15, rounded corners=3mm] 
            ([xshift=-0.15cm, yshift=0.4cm]sum1.west |- Shat.north) 
            rectangle 
            ([xshift=0.15cm, yshift=-0.4cm]tap |- K.south);
    \end{scope}

    \node[font=\normalsize\bfseries, text=pink!60!red, anchor=south] at ([yshift=0.45cm]Shat.north) {$\mathbf{\hat{G}}_{\text{ICI}}$}; 

    \end{tikzpicture}
    \caption{In the ICI framework, the model $\approxSysICI$ consists of the negative feedback interconnection of a copy of the controller $\cont$ and a trainable operator $\approxOp \in \Lp^{SC}$. The model input and output are denoted by $\hat{\mathbf{u}}$ and $\approxNoisyOutMat$, respectively.}
    \label{fig:Ghat}
\end{figure}

The next theorem relates $\stabset$ to the models representable in the ICI framework. Note that statements~\ref{item:T1} and~\ref{item:T2} are not symmetric: sufficiency requires $\cont \in \Lp^\textit{C}$ and incrementally
$\ell_p$-stable, whereas necessity uses only causality of $\cont$. Note that the second hypothesis of~\ref{item:T1} subsumes the first whenever $\cont(\zeroMat)\in\lp$.
\begin{theorem}\label{theo:ici}
    Let $\cont$ be causal and let $\approxOp \in \Lp^\textit{SC}$. Let $\approxSysICI$ be the ICI model defined by~\eqref{eq:ici}.
    \begin{enumerate}[label=\textcolor{darkgray}{\textbf{T.\arabic*}}]
        \item \label{item:T1} \emph{(Sufficiency.)} If $\cont \in \Lp^\textit{C}$ and $\cont$ is incrementally $\ell_p$-stable, then $\approxSysICI$ is strictly causal and $\CL{\approxSysICI}{\cont} \in \Lp$, i.e., $\approxSysICI \in \stabset$.
        \item \label{item:T2} \emph{(Necessity.)} Conversely, for every $\trueSys \in \stabset$ there exists a unique $\approxOp \in \Lp^\textit{SC}$ such that the ICI model~\eqref{eq:ici} satisfies $\approxSysICI = \trueSys$.
    \end{enumerate}
\end{theorem}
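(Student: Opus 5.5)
Both parts rest on one observation: inside the closed loop $\CL{\approxSysICI}{\cont}$ the copy of $\cont$ cancels the external feedback, so the closed-loop maps can be written explicitly in terms of $\approxOp$. I first check that \eqref{eq:ici} is well posed and strictly causal. Since $\approxOp$ is strictly causal, $\hat y_t = \hat S_t\bigl(\hat u_{0:t-1} - K_{0:t-1}(\hat y_{0:t-1})\bigr)$. This recursion determines $\hat y_0,\hat y_1,\dots$ uniquely, and $\hat y_t$ depends only on $\hat u_{0:t-1}$. Hence $\approxSysICI$ is a well-defined strictly causal operator. This step needs only causality of $\cont$, so it also serves \ref{item:T2}.

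\textbf{Sufficiency (\ref{item:T1}).} Write the loop $\mathbf{y}=\approxSysICI(\mathbf{u})+\mathbf{v}$, $\mathbf{u}=\cont(\mathbf{y})+\mathbf{r}$. Let $\hat{\mathbf{y}}=\approxSysICI(\mathbf{u})$, so that $\hat{\mathbf{y}}=\approxOp(\mathbf{u}-\cont(\hat{\mathbf{y}}))$. Substituting $\mathbf{u}$ gives
$\mathbf{u}-\cont(\hat{\mathbf{y}}) = \mathbf{r} + \cont(\hat{\mathbf{y}}+\mathbf{v}) - \cont(\hat{\mathbf{y}})$.
Define $\bm{\delta}:=\cont(\hat{\mathbf{y}}+\mathbf{v})-\cont(\hat{\mathbf{y}})$. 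Incremental $\ell_p$-stability should give $\|\bm{\delta}\|_p\le\gamma\|\mathbf{v}\|_p$.

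The catch is that $\hat{\mathbf{y}}$ is not yet known to be in $\ell_p$, and incremental stability is stated only for $\ell_p$ inputs. I expect this to be the main technical point. I would handle it with truncations. By causality, $\delta_{0:T}$ depends only on the first $T+1$ samples, so I replace $\hat{\mathbf{y}}$ by its truncation padded with zeros, which lies in $\ell_p$. This yields $\|\delta_{0:T}\|\le\gamma\|v_{0:T}\|\le\gamma\|\mathbf{v}\|_p$ for every $T$, and hence $\bm{\delta}\in\ell_p$.

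The rest then follows in order:
\begin{itemize}
\item $\hat{\mathbf{y}}=\approxOp(\mathbf{r}+\bm{\delta})\in\ell_p$, because $\approxOp\in\Lp$.
\item $\mathbf{y}=\hat{\mathbf{y}}+\mathbf{v}\in\ell_p$.
\item $\mathbf{u}=\cont(\mathbf{y})+\mathbf{r}\in\ell_p$, because $\cont\in\Lp^C$.
\end{itemize}
Existence and uniqueness of the loop solution follow from the same causal recursion, since $\approxSysICI$ is strictly causal.

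\textbf{Necessity (\ref{item:T2}).} Given $\trueSys\in\stabset$, I define the candidate explicitly as $\approxOp(\mathbf{w}):=\trueSys(\bm{\mu})$, where $\bm{\mu}$ solves $\bm{\mu}=\cont(\trueSys(\bm{\mu}))+\mathbf{w}$. This is well posed by the causal recursion, since $\trueSys$ is strictly causal. In other words, $\approxOp$ is the map $\mathbf{r}\mapsto$ noise-free output of $\CL{\trueSys}{\cont}$ with $\mathbf{v}=\zeroMat$.

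\begin{itemize}
\item \emph{Stability and strict causality.} $\approxOp$ is strictly causal because the loop is causal and $\trueSys$ is strictly causal. It is $\ell_p$-stable because $\CL{\trueSys}{\cont}\in\Lp$: for $\mathbf{w}\in\ell_p$ and $\mathbf{v}=\zeroMat$, the output $\mathbf{y}=\trueSys(\bm{\mu})$ lies in $\ell_p$.
\item \emph{It reproduces $\trueSys$.} For an input $\hat{\mathbf{u}}$, set $\mathbf{w}=\hat{\mathbf{u}}-\cont(\trueSys(\hat{\mathbf{u}}))$. Then $\bm{\mu}=\hat{\mathbf{u}}$ solves the defining loop, so $\approxOp(\mathbf{w})=\trueSys(\hat{\mathbf{u}})$. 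Thus $\hat{\mathbf{y}}=\trueSys(\hat{\mathbf{u}})$ satisfies \eqref{eq:ici}, and by uniqueness of the recursion $\approxSysICI=\trueSys$.
\item \emph{Uniqueness.} Suppose two operators $\approxOp_1,\approxOp_2$ both give $\trueSys$. Every $\mathbf{w}\in\ell_p$ arises as $\hat{\mathbf{u}}-\cont(\trueSys(\hat{\mathbf{u}}))$ for the $\hat{\mathbf{u}}=\bm{\mu}$ solving the loop. Both operators then map this $\mathbf{w}$ to $\trueSys(\hat{\mathbf{u}})$, so they coincide on $\ell_p$. The same holds on all of $\ell$, using the same surjectivity together with causality.
\end{itemize}
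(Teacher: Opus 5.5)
Your proposal is correct and follows essentially the paper's proof. Sufficiency uses the same truncation argument to bound $\cont(\hat{\mathbf{y}}+\mathbf{v})-\cont(\hat{\mathbf{y}})$ by $\gamma\Vert\mathbf{v}\Vert_p$. Necessity uses the same operator $\approxOp=\trueSys(\eyeOp-\cont\trueSys)^{-1}$ (your closed-loop fixed point), with $\ell_p$-stability from the noise-free restriction of $\CL{\trueSys}{\cont}$ and uniqueness from surjectivity of $\eyeOp-\cont\trueSys$.

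The differences are cosmetic. You verify $\approxSysICI=\trueSys$ by checking that $\trueSys(\hat{\mathbf{u}})$ solves~\eqref{eq:ici} and invoking uniqueness of the causal recursion, instead of the paper's componentwise computation $b_t=\hat u_t$. Your detour through $\ell_p$ in the uniqueness step is unnecessary, since surjectivity already holds on all of $\ell$.
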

\begin{corollary}\label{coroll:bijection}
    Under the hypotheses of~\ref{item:T1}, the map $\approxOp \mapsto \approxSysICI$ defined by~\eqref{eq:ici} is a bijection from $\Lp^\textit{SC}$ onto $\stabset$.
\end{corollary}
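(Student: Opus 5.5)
The corollary will follow by combining the two statements of Theorem~\ref{theo:ici}, with one additional injectivity argument. Under the hypotheses of~\ref{item:T1}, the map $\Phi:\approxOp\mapsto\approxSysICI$ is well defined from $\Lp^{SC}$ into $\stabset$. Well-posedness of~\eqref{eq:ici} holds because $\approxOp$ is strictly causal, so $\hat{y}_t$ depends only on $\hat{u}_{0:t-1}$ and on $\hat{y}_{0:t-1}$ through $\cont$. This lets the recursion be solved forward in time, and \ref{item:T1} then gives strict causality and $\CL{\approxSysICI}{\cont}\in\Lp$. Surjectivity onto $\stabset$ is exactly the existence part of~\ref{item:T2}.

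Injectivity is the only genuinely new step. The uniqueness in~\ref{item:T2} already gives it, since two operators $\approxOp_1,\approxOp_2\in\Lp^{SC}$ with $\Phi(\approxOp_1)=\Phi(\approxOp_2)=\trueSys\in\stabset$ would contradict uniqueness. To make the argument self-contained, I will also exhibit the inverse explicitly. Given $\trueSys\in\stabset$, set
\[
\approxOp(\mathbf{w}) := \trueSys\bigl(\mathbf{u}\bigr), \qquad \mathbf{u} = \mathbf{w} + \cont(\trueSys(\mathbf{u})).
\]
This is the closed loop $\CL{\trueSys}{\cont}$ restricted to $\excMat=\mathbf{w}$ and $\noiseMat=\zeroMat$, mapped to the output. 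The definition is well posed by strict causality of $\trueSys$ and forward recursion. The resulting $\approxOp$ is strictly causal, and it is $\ell_p$-stable because $\CL{\trueSys}{\cont}\in\Lp$ and $\zeroMat\in\lp$.

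Next I will verify that $\Phi\circ\Phi^{-1}=\mathrm{id}$ on $\stabset$ and that $\Phi^{-1}\circ\Phi=\mathrm{id}$ on $\Lp^{SC}$. Take $\hat{\mathbf y}=\approxSysICI(\hat{\mathbf u})$ and define $\mathbf{w}=\hat{\mathbf u}-\cont(\hat{\mathbf y})$. Then~\eqref{eq:ici} reads $\hat{\mathbf y}=\approxOp(\mathbf w)$, and setting $\mathbf u=\hat{\mathbf u}$ in the defining loop shows the two constructions invert each other. Each identity holds because, given the input sequence, the solution of the feedback equations is unique by forward recursion.

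The main obstacle is uniqueness of solutions of the implicit loop equations for arbitrary sequences in $\ell^{\dimIn}$, rather than only for sequences in $\lp$. This is handled entirely by strict causality through time-step induction, and it is needed so that equality of operators holds on all of $\ell$.
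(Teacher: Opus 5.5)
Your proposal is correct and matches the paper's proof: \ref{item:T1} makes the map well defined into $\stabset$, existence in \ref{item:T2} gives surjectivity, and uniqueness in \ref{item:T2}, applied to the common image (which lies in $\stabset$ by \ref{item:T1}), gives injectivity. Your explicit inverse $\mathbf{w}\mapsto\trueSys(\mathbf{u})$ with $\mathbf{u}=\mathbf{w}+\cont(\trueSys(\mathbf{u}))$ is the paper's $\approxOp=\trueSys(\eyeOp-\cont\trueSys)^{-1}$ from the proof of \ref{item:T2}, so that part re-derives the theorem and is not needed for the corollary.
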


\begin{proof}[Proof of Theorem \ref{theo:ici}]
    We first state a useful result from~\cite{Ho20Corol}.
\begin{lemma}[\cite{Ho20Corol}]\label{coroll:inverse}
Consider a causal operator $\upOp$ such that $\upOpKnot := \upOp - \eyeOp$ is strictly causal, where $\eyeOp$ is the identity operator. Then, the causal operator $\upOp^{-1}$ exists, and the sequence $\bMat = \upOp^{-1} \aMat$ satisfies
\begin{equation}\label{eq:lemma_recursion}
    \bVec{t} = \aVec{t} - \upKnotFunc{t} (\bVec{0:t-1}), \qquad t = 0, 1, \dots
\end{equation}
\end{lemma}

We start by proving~\ref{item:T1}: for every $\excMat \in \lp^\dimIn$ and $\noiseMat \in \lp^\dimOut$ we show that both $\approxNoisyOutMat \in \lp^\dimOut$ and $\hat{\trueInMat} \in \lp^\dimIn$, so that $\CL{\approxSysICI}{\cont} \in \Lp$. Denote by $\whatMat$ the input to the operator $\approxOp$. In $\CL{\approxSysICI}{\cont}$ the external copy of $\cont$ is driven by the measured output $\approxNoisyOutMat + \noiseMat$, while the copy internal to $\approxSysICI$ is driven by $\approxNoisyOutMat$; hence
\begin{align}
    \whatMat = \hat{\trueInMat} - \cont (\approxNoisyOutMat) = \excMat + \cont \bigl(\approxOp(\whatMat) + \noiseMat \bigr) - \cont \bigl(\approxOp(\whatMat) \bigr).
    \label{eq:proof_what}
\end{align}

Since $\approxOp$ is strictly causal and $\cont$ is causal, $\whatVec{t}$ in~\eqref{eq:proof_what} depends only on $\excVec{0:t}$, $\noiseVec{0:t}$ and $\whatVec{0:t-1}$; hence $\whatMat$ exists and is unique, so the loop is well posed, and the same recursion shows that $\approxSysICI$ is strictly causal.

Let $P_M$ denote the truncation operator defined as $(P_M \mathbf{x})_t = x_t$ for $t \leq M$ and $(P_M \mathbf{x})_t = 0$ otherwise. 

Every truncated signal belongs to $\lp$, and causality of $\cont$ gives $P_M \cont(\mathbf{x}) = P_M \cont(P_M \mathbf{x})$. Since $P_M \mathbf{x} \in \lp$ for every $\mathbf{x}$, the i.f.g.\ inequality $\Vert \cont(\mathbf{x}_1) - \cont(\mathbf{x}_2) \Vert_p \leq \gamma_K \Vert \mathbf{x}_1 - \mathbf{x}_2 \Vert_p$ holds for the truncated signals~\cite{van2000l2}.
Truncating~\eqref{eq:proof_what} and using $\Vert P_M \mathbf{x} \Vert_p \leq \Vert \mathbf{x} \Vert_p$ together with the triangle inequality,
\begin{align*}
    \Vert P_M \whatMat \Vert_p
    &\leq \Vert \excMat \Vert_p + \bigl\Vert \cont \bigl(P_M \approxOp(\whatMat) + P_M \noiseMat \bigr) - \cont \bigl(P_M \approxOp(\whatMat) \bigr) \bigr\Vert_p \\
    &\leq \Vert \excMat \Vert_p + \gamma_K \Vert P_M \noiseMat \Vert_p
    \;\leq\; \Vert \excMat \Vert_p + \gamma_K \Vert \noiseMat \Vert_p ,
\end{align*}
uniformly in $M$, where $\gamma_K$ is the incremental finite gain of $\cont$. 

Since $M \mapsto \Vert P_M \whatMat \Vert_p$ is nondecreasing and bounded above, it converges to $\Vert \whatMat \Vert_p$ as $M \to \infty$; therefore $\whatMat \in \lp^\dimIn$ and $\Vert \whatMat \Vert_p \leq \Vert \excMat \Vert_p + \gamma_K \Vert \noiseMat \Vert_p$.

Since $\approxOp \in \Lp$, it follows that $\approxNoisyOutMat = \approxOp(\whatMat) \in \lp^\dimOut$, and so is the measured output $\approxNoisyOutMat + \noiseMat$. Consequently, since $\cont \in \Lp$, we obtain $\hat{\trueInMat} = \cont (\approxNoisyOutMat + \noiseMat) + \excMat \in \lp^\dimIn$. Therefore $\CL{\approxSysICI}{\cont} \in \Lp$, i.e., $\approxSysICI \in \stabset$.

Next, we prove~\ref{item:T2}. Given $\trueSys \in \stabset$, define $\approxOp := \trueSys(\eyeOp - \cont \trueSys)^{-1}$, where the inverse exists by Lemma~\ref{coroll:inverse} since $\cont \trueSys$ is strictly causal. The operator $\approxOp$ is strictly causal, being the composition of the causal operator $(\eyeOp - \cont \trueSys)^{-1}$ with the strictly causal $\trueSys$; moreover $\approxOp \in \Lp$, since it coincides with the restriction of $\CL{\trueSys}{\cont}$ to $\noiseMat = \zeroMat$, i.e., with the map $\excMat \mapsto \trueNoisyOutMat$, which belongs to $\Lp$ because $\trueSys \in \stabset$ (see also~\cite{Desoer82GlobalPO}). 
Note that this construction gives $\approxOpFunc{0}(\varnothing) = \trueSysFunc{0}(\varnothing)$ automatically.

We apply the same input $\hat{\trueInMat}$ to both $\trueSys$ and $\approxSysICI$, as illustrated in Figure~\ref{fig:Ghat}, and denote their respective outputs by $\approxNoisyOutMat^{\mathrm{des}}$ and $\approxNoisyOutMat$. Our goal is to prove that $\approxNoisyOutMat^{\mathrm{des}} = \approxNoisyOutMat$, or equivalently, $\hat{y}_{t}^{\mathrm{des}} = \hat{y}_{t}$ for all $t \geq 0$.

For $\trueSys$ we have $\approxNoisyOutMat^{\mathrm{des}} = \trueSys (\hat{\trueInMat})$, while from~\eqref{eq:ici},
\begin{align*}
    \approxNoisyOutMat &= \approxSysICI(\hat{\trueInMat}) = \approxOp \bigl( \hat{\trueInMat} - \cont (\approxNoisyOutMat) \bigr)\\
    &= \trueSys \underbrace{(\eyeOp - \cont \trueSys)^{-1}\bigl( \hat{\trueInMat} - \cont (\approxNoisyOutMat) \bigr)}_{\bMat} = \trueSys \, \bMat.
\end{align*}
These relations are equivalent to the causal relations
\begin{align}
    \hat{y}_{t}^{\mathrm{des}} = \trueSysFunc{t} (\hat{u}_{0:t-1}),
    \qquad
    \hat{y}_{t} = \trueSysFunc{t}(\bVec{0:t-1}), \qquad \forall t \geq 0, \label{eq:proof_ici_timedomain}
\end{align}
where we used the strict causality of $\trueSys$ to truncate the operands at time $t-1$. Moreover, by the recursion~\eqref{eq:lemma_recursion} of Lemma~\ref{coroll:inverse},
\begin{align}\label{eq:bt2}
    \bVec{t} = \hat{u}_t - \contFunc{t}(\hat{y}_{0:t}) + \contFunc{t} \bigl( \trueSysFunc{0:t}(\bVec{0:t-1})\bigr), \qquad \forall t \geq 0.
\end{align}
The second relation in~\eqref{eq:proof_ici_timedomain} gives $\hat{y}_{0:t} = \trueSysFunc{0:t}(\bVec{0:t-1})$, so the two controller terms in~\eqref{eq:bt2} cancel identically and $\bVec{t} = \hat{u}_t$ for all $t \geq 0$. Substituting into~\eqref{eq:proof_ici_timedomain} yields $\hat{y}_{t}^{\mathrm{des}} = \hat{y}_{t}$ for all $t \geq 0$.

Uniqueness is forced by~\eqref{eq:ici}. If $\approxOp \in \Lp^{SC}$ satisfies $\approxSysICI = \trueSys$, then $\trueSys(\hat{\trueInMat}) = \approxOp\bigl(\hat{\trueInMat} - \cont(\trueSys(\hat{\trueInMat}))\bigr)$ for every $\hat{\trueInMat}$. Setting $\bMat = (\eyeOp - \cont \trueSys)(\hat{\trueInMat})$ and using the invertibility of $\eyeOp - \cont \trueSys$ from Lemma~\ref{coroll:inverse} gives $\approxOp(\bMat) = \trueSys (\eyeOp - \cont \trueSys)^{-1}(\bMat)$ for every $\bMat$, i.e., $\approxOp = \trueSys(\eyeOp - \cont \trueSys)^{-1}$.
\end{proof}
\begin{proof}[Proof of Corollary~\ref{coroll:bijection}]
By~\ref{item:T1}, the map $\approxOp\mapsto\approxSysICI$ is well defined from $\Lp^\textit{SC}$
into $\stabset$, and existence in~\ref{item:T2} makes it surjective onto $\stabset$. For
injectivity, let $\approxOp_1,\approxOp_2\in\Lp^\textit{SC}$ induce the same model
$\trueSys:=\approxSysICI^{1}=\approxSysICI^{2}$. Then $\trueSys\in\stabset$ by~\ref{item:T1}, so
uniqueness in~\ref{item:T2} gives $\approxOp_1=\approxOp_2$.
\end{proof}

\begin{remark}[Open-loop instability of $\approxSysICI$]
Note that Theorem~\ref{theo:ici} certifies $\CL{\approxSysICI}{\cont} \in \Lp$, but $\approxSysICI$ itself need not be $\lp$-stable: while $\approxOp \in \Lp$ by construction, the corresponding $\approxSysICI$, recovered through the feedback equation~\eqref{eq:ici}, may be open-loop unstable. This is precisely what allows the ICI framework to identify inherently unstable plants while keeping the identified closed-loop behavior consistent with the true one.
\end{remark}

Since $\trueSys \in \stabset$, point~\ref{item:T2} in Theorem~\ref{theo:ici} associates with $\trueSys$ a unique operator $\trueOp \in \Lp^{SC}$, which we call the \emph{true ICI operator}.
By Corollary~\ref{coroll:bijection}, the constrained problem~\eqref{eq:ideal_objective} can be replaced by the following unconstrained optimization over $\approxOp$:
\begin{align}
    \min_{\approxOp \in \Lp^{SC}}\;& \mathbb{E}_{\bm{\tau}\sim \mathcal{D}} 
    \; c(\approxOp,\bm{\tau}),
    \label{eq:ici_risk_min} 
\end{align}
where the associated model $\approxSysICI$ is the interconnection of $\hat{\mathbf{S}}$ and $\mathbf{K}$ defined in~\eqref{eq:ici}.
The key difference is that the constraint $\CL{\approxSysICI}{\cont} \in \Lp$ is now structurally satisfied for every $\approxOp \in \Lp^{SC}$.

\begin{remark}[Recursive implementation of $\approxOp$]\label{remark:recursive}

    Since $\approxOp$ is strictly causal, the output $\hat{y}_t$ depends on past outputs only up to time $t-1$.
        This observation enables a recursive computation of $\hat{y}_t$, eliminating the need to solve the implicit nonlinear equation~\eqref{eq:ici} at each time step.
    Additionally, if  $\approxOp$ and $\cont$ are input-output operators associated with dynamical systems (see Example~\ref{ex:nonlin}), their state-space representations can be used to implement~\eqref{eq:ici} as a Markovian dynamical system, thereby avoiding the need to store the full history of $\hat{u}_{0:t-1}$ and $\hat{y}_{0:t-1}$.
\end{remark}

\subsection{Unconstrained optimization of the operator \texorpdfstring{$\approxOp$}{Lg}}\label{sec:unconstrained}

While the ICI parameterization~\eqref{eq:ici_risk_min} removes the explicit stability constraint, optimizing over the infinite-dimensional space $\Lp^{SC}$ remains intractable. To make this feasible, the operator $\approxOp$ can be modeled using families of $\Lp^\textit{SC}$ operators described by a finite-dimensional parameter vector $\params \in \R^\dimParams$, such as Recurrent Equilibrium Networks (RENs)~\cite{revay2023recurrent}, certain classes of State-Space Models (SSMs)~\cite{gu2022efficiently,massai2025free,zakwan2026controller}, or neural network representations of Hamiltonian systems~\cite{zakwan2024neural}, which admit a state-space form enabling recursive output computation (see Remark~\ref{remark:recursive}). Crucially, these families yield a free parameterization, guaranteeing that the parameterized operator $\approxOp^{\params}$ belongs to $\Lp^\textit{SC}$ for any arbitrary $\params \in \R^\dimParams$.
Defining $\approxSysICI^{\params}$ as the ICI model~\eqref{eq:ici} built from $\approxOp^{\theta}$ and $\cont$, problem~\eqref{eq:ici_risk_min} reduces to a tractable, unconstrained optimization over $\theta \in \mathbb{R}^{d_{\theta}}$:
\begin{equation}\label{eq:ici_obj_theta}
  \min_{\theta\in\mathbb{R}^{d_\theta}}\;
  \mathbb{E}_{\bm{\tau}\sim\mathcal{D}}\,
  c\big(\approxOp^{\theta},\bm{\tau}\big),
\end{equation}
which can be efficiently solved using gradient-based methods.
Since the predictor is uniquely determined by $\theta$, with a small abuse of notation, hereafter we use $c(\theta,\bm{\tau})$ instead of $c(\approxOp^\theta,\bm{\tau})$. 

\subsection{Context-enriched ICI}
\label{sec:context_ici}

Plants often operate under varying conditions, e.g., for robotic systems, changes in payload or terrain-dependent friction. When measurable, we represent information about these conditions by a context sequence $\contextMat=(\zeta_0,\zeta_1,\ldots)$, with $\zeta_t\in\mathcal Z\subset\R^{\dimContext}$. Providing context can help a model adapt its predictions across operating regimes without having to infer those conditions solely from the input history.

Crucially, context may persist even when the dynamical input decays: for example, a constant nonzero payload signal is bounded but does not belong to $\lp$ for $p<\infty$. Simply appending context to the input of an $\lp$-stable operator therefore does not automatically preserve its stability properties. Therefore, to inject context, we instead adapt the operator decomposition of~\cite{massai2026context} to the ICI predictor:
\begin{equation}\label{eq:ctx_split}
    \approxOp(\whatMat;\contextMat)=
    \approxOp_{\infty}(\whatMat,\contextMat)
    \boxtimes\approxOp_p(\whatMat).
\end{equation}
Here, $\approxOp_p\in\Lp^{SC}$ generates stable dynamical features from $\whatMat$, while the strictly causal mixer $\approxOp_\infty$ produces matrices $A_t$ satisfying $\sup_{t,\whatMat,\contextMat}\Vert A_t\Vert\leq\bar\sigma<\infty$. The product $\boxtimes$ acts pointwise: $(\mathbf A\boxtimes\mathbf z)_t=A_tz_t$. Intuitively, context changes how the features are scaled and combined, while their magnitude remains controlled by the stable module $\approxOp_p$.
Indeed, the pointwise bound $\Vert A_tz_t\Vert\leq\bar\sigma\Vert z_t\Vert$ gives~\cite[Thm.~1]{massai2026context}
\begin{equation}\label{eq:ctx_bound}
    \Vert\approxOp(\whatMat;\contextMat)\Vert_p
    \leq\bar\sigma\,\Vert\approxOp_p(\whatMat)\Vert_p,
    \qquad\forall\,\whatMat\in\lp,
\end{equation}
uniformly over context sequences. Thus, $\approxOp(\cdot\,;\contextMat)\in\Lp^{SC}$, and under the controller assumptions of~\ref{item:T1}, the resulting ICI model remains stabilized by $\cont$ for every admissible context.

Under the mild weighted-envelope input assumption and context-uniform envelope-preservation condition of~\cite[Thm.~2]{massai2026context}, the needed factorization~\eqref{eq:ctx_split} is also necessary and hence non-conservative. 

Finally, context can also be computed from predicted outputs, $\zeta_t=\kappa(\approxNoisyOut{t})$. The mixer's strict causality prevents an algebraic loop, and the same bound holds along the resulting trajectory. Consequently, known state-dependent context features can be supplied during simulation, as in Section~\ref{subsec:exp_context}, without requiring recorded context trajectories. In the sequel, for simplicity, we omit including context variables in the available data; indeed, they can always be incorporated in a plug-in fashion, as described so far, without altering the main result.

\section{Tractable Identification via the ICI Parameterization}
\label{sec:ERM_framework}

We now turn~\eqref{eq:ici_obj_theta} into a practical learning problem: we restrict it to finite horizons, introduce the direct (Section~\ref{sec:direct}) and indirect (Section~\ref{sec:indirect}) predictors, and fix the loss to the simulation error criterion (Section~\ref{sec:SEM}).

\subsection{Finite-Horizon Empirical Risk Minimization}
To address challenges~\ref{challenge:horizon} and~\ref{challenge:distribution}, we restrict~\eqref{eq:ici_obj_theta} to finite horizons and approximate the expectation empirically. Let $\mathcal{D}_T$ denote the distribution over closed-loop trajectories truncated to a finite horizon $T$, where a single trajectory is denoted by $\tau_T = (r_{0:T-1}, u_{0:T-1}, y_{0:T-1})$. We define the true finite-horizon identification risk as:
\begin{align}
    & J(\theta, \mathcal{D}_T) := \mathbb{E}_{\tau_T \sim \mathcal{D}_T} \left[ c \big( \theta, \tau_T \big) \right], \label{eq:true_loss} 
\end{align}
where $c(\cdot, \cdot)$ quantifies the discrepancy between the measured outputs and the predicted sequence $\hat{y}_{0:T-1}$. This is the quantity bounded in Section~\ref{sec:PAC}. Since $\realdataset_T$ is unknown, we rely on $\numSamples$ independent closed-loop experiments, $\mathbb{S}_T^N = \{ \tau_T^n \}_{n=0}^{N-1}$ with $\tau_T^n \sim \realdataset_T$ i.i.d., and approximate~\eqref{eq:true_loss} with the empirical risk:
\begin{align}
    & \hat{J}(\theta, \dataset_{T}^N) := \frac{1}{N} \sum_{n=0}^{N-1} \left[ c(\theta,\tau^n_{T}) \right]. 
    \label{eq:empirical_loss} 
\end{align}
The value of $\hat{J}$ depends on how $\hat{y}_{0:T-1}^n$ is generated from $\tau_T^n$: the following subsections introduce the direct and indirect strategies, which differ precisely in this choice.

\subsection{Direct ICI identification}\label{sec:direct}

In the direct ICI strategy the measured plant input is fed directly to the reconstructed model, $\{\hat{u}_{0:T-1}^n\}_{n=0}^{N-1} = \trueInData$, treating the closed-loop data as if collected in an open-loop experiment:
\begin{align}
    \min_{\theta \in \mathbb{R}^{d_\theta}} \quad & \hat{J}(\theta, \dataset_{T}^N) \label{eq:direct} \\
    \text{s.t.} \quad & \hat{y}_t^n = \hat{S}^{\theta}_t \bigl( u_{0:t-1}^n - K_{0:t-1} (\hat{y}_{0:t-1}^n) \bigr), \nonumber \\
    & \forall t=0,\dots,T-1, \; \forall n=0,\dots,N-1. \nonumber
\end{align}

The model $\approxSysICI^\theta$ built from $\approxOp^\theta$ through~\eqref{eq:ici} satisfies $\CL{\approxSysICI^\theta}{\cont} \in \Lp$, as guaranteed by point~\ref{item:T1} in Theorem~\ref{theo:ici}.

While direct methods lead to tractable optimization problems, they have two fundamental drawbacks.
\begin{enumerate}[label=\textcolor{darkgray}{\textbf{D.\arabic*}}]
    \item \label{item:D1} They can be used only when $\approxSys_{\text{ICI}}$ is stable~\cite{Forssell99Revisited},
    since with a simulation criterion, as adopted in Section~\ref{sec:SEM}, the risk $J$ is evaluated
    on open-loop predictions of $\approxSys_{\text{ICI}}$, which may diverge if the model is unstable.
     \item \label{item:D2} The regressor is noise-corrupted. Feedback makes $\noiseMat$ enter the trainable operator's input through $\trueInMat = \cont(\trueNoisyOutMat) + \excMat$ (\autoref{fig:true_CL}), so the regressor and the residual are correlated at first order. Even in the linear case this makes direct methods inconsistent under a misspecified noise model: the true $\trueSys$ is not recovered from infinite data~\cite{Soderstrom88SysID,Vandenhof98CLissues,Goodwin2002Bias}.
\end{enumerate}

\subsection{Indirect ICI identification}\label{sec:indirect}
\begin{figure}[htbp]
    \centering
    \resizebox{0.98\columnwidth}{!}{
    \resizebox{.9\linewidth}{!}{\begin{tikzpicture}[
        auto,
        block_pink/.style = {draw=black, rectangle, minimum height=0.6cm, minimum width=0.6cm, align=center, line width=0.8pt, fill=pink!40},
        block_white/.style = {draw=gray!90, rectangle, minimum height=0.6cm, minimum width=0.6cm, align=center, line width=1pt, fill=white},
        sum/.style = {draw=black, circle, minimum size=0.35cm, inner sep=0pt, line width=0.8pt, fill=white},
        line/.style = {draw, -{Latex[length=1.5mm, width=1.5mm]}, line width=0.8pt},
        font=\normalsize\bfseries
    ]

    \begin{scope}[shift={(0,0)}]
        \node[coordinate] (input_t) at (-0.8, 0) {};
        
        \node[sum] (sum1_t) at (0, 0) {};
        \node[sum] (sum2_t) at (1.0, 0) {};
        
        \node[block_pink] (Shat_t) at (2.0, 0) {$\mathbf{S}$};
        \node[block_pink] (K_inner_t) at (2.0, -0.9) {$\mathbf{K}$};
        
        \coordinate (tap_inner_t) at (3.0, 0);
        
        \node[sum] (sum3_t) at (3.5, 0) {};
        \node[coordinate] (v_t) at (3.5, 0.8) {};
        
        \coordinate (tap_outer_t) at (4.0, 0);
        \node[coordinate] (output_t) at (4.6, 0) {};
        
        \node[block_white] (K_outer_t) at (2.0, -1.9) {$\mathbf{K}$};

        \node[font=\tiny, scale=0.75] at ([xshift=-0.08cm]sum1_t.center) {$+$};
        \node[font=\tiny, scale=0.75] at ([yshift=-0.08cm]sum1_t.center) {$+$};
        \node[font=\tiny, scale=0.75] at ([xshift=-0.08cm]sum2_t.center) {$+$};
        \node[font=\tiny, scale=0.75] at ([yshift=-0.08cm]sum2_t.center) {$-$};
        \node[font=\tiny, scale=0.75] at ([xshift=-0.08cm]sum3_t.center) {$+$};
        \node[font=\tiny, scale=0.75] at ([yshift=0.08cm]sum3_t.center) {$+$};

        \path[line] (input_t) -- node[above] {$\mathbf{r}$} (sum1_t.west);
        \path[line] (sum1_t.east) -- node[above] {$\mathbf{u}$} (sum2_t.west);
        \path[line] (sum2_t.east) -- (Shat_t.west);
        
        \path[draw, line width=0.8pt] (Shat_t.east) -- (tap_inner_t);
        \path[line] (tap_inner_t) -- (sum3_t.west);
        \path[line] (tap_inner_t) |- (K_inner_t.east);
        \path[line] (K_inner_t.west) -| (sum2_t.south);
        
        \path[line] (v_t) -- (sum3_t.north);
        \node[above] at (v_t) {$\mathbf{v}$};
        
        \path[draw, line width=0.8pt] (sum3_t.east) -- (tap_outer_t);
        \path[line] (tap_outer_t) -- node[above] {$\mathbf{y}$} (output_t);
        
        \path[line] (tap_outer_t) |- (K_outer_t.east);
        \path[line] (K_outer_t.west) -| (sum1_t.south);

        \begin{scope}[shift={(0, -0.95)}]
            \node[font=\Large\bfseries] at (5.2, 0) {$\approx$};

            \node[coordinate] (r_in_lin_t) at (5.8, 0) {};
            \node[block_pink] (Shat_lin_t) at (7.0, 0) {$\mathbf{S}$};
            \node[sum] (sum_lin_t) at (8.2, 0) {};
            \node[coordinate] (v_in_lin_t) at (8.2, 1.0) {};
            \node[coordinate] (y_out_lin_t) at (9.2, 0) {};

            \node[font=\tiny, scale=0.75] at ([xshift=-0.08cm]sum_lin_t.center) {$+$};
            \node[font=\tiny, scale=0.75] at ([yshift=0.08cm]sum_lin_t.center) {$+$};

            \path[line] (r_in_lin_t) -- node[above] {$\mathbf{r}$} (Shat_lin_t.west);
            \path[line] (Shat_lin_t.east) -- (sum_lin_t.west);
            \path[line] (v_in_lin_t) -- (sum_lin_t.north);
            
            \node[above] at (v_in_lin_t) {$\mathbf{S}_L(\mathbf{K}_L(\mathbf{v})) + \mathbf{v}$};
            
            \path[line] (sum_lin_t.east) -- node[above] {$\mathbf{y}$} (y_out_lin_t);
        \end{scope}

        \begin{scope}[on background layer]
            \fill[gray!15, rounded corners=2mm] 
                ([xshift=-0.2cm, yshift=0.35cm]sum2_t.north west) 
                rectangle 
                ([xshift=0.2cm, yshift=-0.3cm]tap_inner_t |- K_inner_t.south);
                
            \draw[draw=gray!80!black, densely dotted, line width=1pt, rounded corners=4mm]
                (-1.1, 1.8) rectangle (9.4, -2.5);
        \end{scope}

        \node[font=\normalsize\bfseries, text=pink!60!red, anchor=south] at (2.0, 0.525) {$\mathbf{G}$}; 
        \node[fill=gray!80!black, draw=gray!60, line width=0.8pt, text=white, rounded corners=1.5mm, font=\normalsize, inner xsep=4pt, inner ysep=2pt] at (-0.1, 1.45) {Nonlinear};
    \end{scope}

    \begin{scope}[shift={(0, -5.0)}]
        \node[coordinate] (input_b) at (-0.8, 0) {};
        
        \node[sum] (sum1_b) at (0, 0) {};
        \node[sum] (sum2_b) at (1.0, 0) {};
        
        \node[block_pink] (Shat_b) at (2.0, 0) {$\mathbf{S}$};
        \node[block_pink] (K_inner_b) at (2.0, -0.9) {$\mathbf{K}$};
        
        \node[sum] (sum3_b) at (3.5, 0) {};
        \node[coordinate] (v_b) at (3.5, 1.0) {}; 
        
        \coordinate (tap_outer_b) at (4.0, 0);
        \node[coordinate] (output_b) at (4.6, 0) {};
        
        \node[block_white] (K_outer_b) at (2.0, -1.9) {$\mathbf{K}$};

        \node[font=\tiny, scale=0.75] at ([xshift=-0.08cm]sum1_b.center) {$+$};
        \node[font=\tiny, scale=0.75] at ([yshift=-0.08cm]sum1_b.center) {$+$};
        \node[font=\tiny, scale=0.75] at ([xshift=-0.08cm]sum2_b.center) {$+$};
        \node[font=\tiny, scale=0.75] at ([yshift=-0.08cm]sum2_b.center) {$-$};
        \node[font=\tiny, scale=0.75] at ([xshift=-0.08cm]sum3_b.center) {$+$};
        \node[font=\tiny, scale=0.75] at ([yshift=0.08cm]sum3_b.center) {$+$};

        \path[line] (input_b) -- node[above] {$\mathbf{r}$} (sum1_b.west);
        \path[line] (sum1_b.east) -- node[above] {$\mathbf{u}$} (sum2_b.west);
        \path[line] (sum2_b.east) -- (Shat_b.west);
        \path[line] (Shat_b.east) -- (sum3_b.west);
        
        \path[line] (v_b) -- (sum3_b.north);
        \node[above] at (v_b) {$(\mathbf{SK} + \mathbf{I})\mathbf{v}$};
        
        \path[draw, line width=0.8pt] (sum3_b.east) -- (tap_outer_b);
        \path[line] (tap_outer_b) -- node[above] {$\mathbf{y}$} (output_b);
        
        \path[draw, line width=0.8pt] (tap_outer_b) -- (tap_outer_b |- K_outer_b.east);
        \path[line] (tap_outer_b |- K_inner_b.east) -- (K_inner_b.east);
        \path[line] (tap_outer_b |- K_outer_b.east) -- (K_outer_b.east);
        \path[line] (K_inner_b.west) -| (sum2_b.south);
        \path[line] (K_outer_b.west) -| (sum1_b.south);

        \begin{scope}[shift={(0, -0.95)}]
            \node[font=\Large\bfseries] at (5.2, 0) {$\equiv$};

            \node[coordinate] (r_in_lin_b) at (5.8, 0) {};
            \node[block_pink] (Shat_lin_b) at (7.0, 0) {$\mathbf{S}$};
            \node[sum] (sum_lin_b) at (8.2, 0) {};
            \node[coordinate] (v_in_lin_b) at (8.2, 1.0) {};
            \node[coordinate] (y_out_lin_b) at (9.2, 0) {};

            \node[font=\tiny, scale=0.75] at ([xshift=-0.08cm]sum_lin_b.center) {$+$};
            \node[font=\tiny, scale=0.75] at ([yshift=0.08cm]sum_lin_b.center) {$+$};

            \path[line] (r_in_lin_b) -- node[above] {$\mathbf{r}$} (Shat_lin_b.west);
            \path[line] (Shat_lin_b.east) -- (sum_lin_b.west);
            
            \path[line] (v_in_lin_b) -- (sum_lin_b.north);
            \node[above] at (v_in_lin_b) {$(\mathbf{SK} + \mathbf{I})\mathbf{v}$};
            
            \path[line] (sum_lin_b.east) -- node[above] {$\mathbf{y}$} (y_out_lin_b);
        \end{scope}

        \begin{scope}[on background layer]
            \draw[draw=gray!80!black, densely dotted, line width=1pt, rounded corners=4mm]
                (-1.1, 1.8) rectangle (9.4, -2.5);
        \end{scope}

        \node[fill=gray!80!black, draw=gray!60, line width=0.8pt, text=white, rounded corners=1.5mm, font=\normalsize, inner xsep=4pt, inner ysep=2pt] at (-0.1, 1.45) {Linear};
    \end{scope}

    \end{tikzpicture}}
    }
    \caption{\textit{Top:} Closed-loop system connecting the true plant $\trueSys$ (represented within the ICI framework by the operator $\trueOp$) and the controller $\cont$ (left), alongside its linearized approximation (right).
    \textit{Bottom:} In the linear case, noise injection can be shifted to enable loop cancellation, yielding the exact simplified form on the right.}
    \label{fig:approx-cl-all}
\end{figure}
Indirect identification estimates the closed-loop response to the exogenous excitation $\excMat$ and uses the known controller to recover the plant~\cite{Soderstrom88SysID}. In the ICI framework, this amounts to learning the true operator $\trueOp \in \Lp^{SC}$ associated with $\trueSys \in \stabset$ by point~\ref{item:T2} in Theorem~\ref{theo:ici}, and recovering the plant through~\eqref{eq:ici}, evaluated by the recursion in Remark~\ref{remark:recursive}; this implements the dual-Youla parameterization idea~\cite{anderson1998youla,Desoer82GlobalPO} with the finite-dimensional stable operator models of Section~\ref{sec:unconstrained}. The closed loop under analysis is then equivalent to the interconnection of $\trueOp$ and $\cont$ in the top-left plot of~\autoref{fig:approx-cl-all}, and the measured output satisfies
\begin{align}
    \mathbf{y} = \trueOp ( \bm{\omega} ) + \noiseMat,
    \quad
    \bm{\omega} := \excMat + \cont(\mathbf{y}) - \cont(\mathbf{y} - \noiseMat) .
    \label{eq:strat3_y}
\end{align}
Identity~\eqref{eq:strat3_y} shows where the noise enters: additively at the output, and through the controller inside the argument of $\trueOp$; only the first entry is exogenous. Writing $\trueNoisyOutMat=\trueOp(\excMat)+\mathbf{e}$ gives the residual in closed form, $\mathbf{e}=\trueOp(\bm{\omega})-\trueOp(\excMat)+\noiseMat$, which vanishes with $\noiseMat$: then $\bm{\omega}=\excMat$ and $\trueOp$ reduces to the noise-free map $\excMat\mapsto\trueNoisyOutMat$ of $\CL{\trueSys}{\cont}$ in~\ref{item:T2}. Fitting $\trueOp$ from $(\excMat,\trueNoisyOutMat)$ thus has the form of an open-loop output-error problem, but with a non-exogenous disturbance whose dependence on $\excMat$ sets the bias. To expose this, assume $\trueSys_{0:T-1}$ and $\cont_{0:T-1}$ twice continuously differentiable near the operating trajectories, so that $\trueOp_{0:T-1}$ is as well by~\eqref{eq:lemma_recursion}, and linearize~\cite{anderson1998youla} $\trueOp$ around $\excMat$ and $\cont$ around $\trueOp(\excMat)$:
\begin{align}
    \mathbf{y}
    &\approx 
    \trueOp \bigl( \excMat \bigr) + \trueOp_L \bigl( \cont(\mathbf{y}) - \cont(\mathbf{y} - \approxNoiseMat) \bigr) + \approxNoiseMat
    \nonumber\\
    &\approx 
    \trueOp (\hspace{-24pt}\underbrace{\excMat}_{\textit{noise-independent signal}} \hspace{-24pt})
    + 
    \underbrace{\trueOp_L \bigl( \cont_L (\approxNoiseMat) \bigr) + \approxNoiseMat}_{\textit{noise}},
    \label{eq:linearized_out}
\end{align}
where $\trueOp_L := D\trueOp[\excMat]$ and $\cont_L := D\cont[\trueOp(\excMat)]$ are causal linear operators, block lower triangular by causality and strictly so for $\trueOp_L$. The second line follows by expanding both terms of $\cont(\trueNoisyOutMat)-\cont(\trueNoisyOutMat-\approxNoiseMat)$ at $\trueOp(\excMat)$: the constant terms cancel and the linear ones combine into $\cont_L(\approxNoiseMat)$. Since $\trueNoisyOutMat-\trueOp(\excMat)$ and $\trueNoisyOutMat-\approxNoiseMat-\trueOp(\excMat)$ are $O(\|\approxNoiseMat\|)$ by incremental $\ell_p$-stability of $\cont$ and local Lipschitz continuity of $\trueOp_{0:T-1}$, both approximations in~\eqref{eq:linearized_out} hold up to $O(\|\approxNoiseMat\|^2)$ with coefficients depending on $\excMat$ alone, so that, to first order in the noise, $\excMat$ propagates through $\trueOp$ alone (top-right plot of~\autoref{fig:approx-cl-all}). The same identity holds for the ICI model with $\approxOp^{\params}$ in place of $\trueOp$, so $\approxOp^{\params}(\excMat)$ is its first-order closed-loop response to $\excMat$, which we adopt as predictor. Accordingly, we adopt the open-loop learning problem
\begin{align}
    \min_{\theta \in \mathbb{R}^{d_\theta}} \quad & \hat{J}(\theta, \dataset_{T}^N) \label{eq:indirect} \\
    \text{s.t.} \quad & \hat{y}_t^n = \hat{S}^{\theta}_t \bigl(r_{0:t-1}^n\bigr), \nonumber \\
    & \forall t=0,\dots,T-1, \; \forall n=0,\dots,N-1, \nonumber
\end{align}
with $\approxSys_{\text{ICI}}$ reconstructed as the interconnection of the learned $\approxOp^{\theta}$ and the known $\cont$. Problem~\eqref{eq:indirect} is not an exact reformulation: its regressor is $\excMat$, while the signal driving $\trueOp$ in the experiment is $\bm{\omega}$, which equals $\excMat$ only for $\noiseMat=\zeroMat$. By~\eqref{eq:linearized_out} the first-order residual is linear in $\noiseMat$ with $\excMat$-dependent coefficients, hence of zero conditional mean given $\excMat$; the minimizer of~\eqref{eq:indirect} is therefore biased at $O(\Vert\noiseMat\Vert^{2})$, set by the curvature of $\trueOp$ and $\cont$, against the first-order correlation of~\ref{item:D2}. Superposition makes the separation exact in the linear case (Remark~\ref{rem:openloop}).

\begin{remark}[Validity of linear approximations]
\label{remark:validity_linear_approx}
The approximation~\eqref{eq:linearized_out} is exact in the absence of noise and when $\trueOp$ and $\cont$ are affine, with $\trueOp_L$ and $\cont_L$ their linear parts; in the linear case the equivalent output disturbance $(\trueOp\cont + \eyeOp)\noiseMat$ makes the loops cancel exactly (bottom plots of~\autoref{fig:approx-cl-all}). In the nonlinear case it is a local small-noise approximation: both $\noiseMat$ and the induced perturbation $\cont(\mathbf{y})-\cont(\mathbf{y}-\noiseMat)$ must stay where the Taylor remainders are small. Incremental $\ell_p$-stability of $\cont$ bounds that perturbation by $\gamma_\cont$ times the noise norm on each truncated horizon, and bounds the induced $\ell_p$-norm of every truncated Jacobian of $\cont$ by $\gamma_\cont$, so $\cont_L$ extends to a bounded operator on $\ell_p$. Accuracy therefore depends on local curvature and noise amplification; a high signal-to-noise ratio alone does not guarantee it.
\end{remark}

\begin{remark}[Properties of the indirect predictor]\label{rem:openloop}
    Fitting $\trueOp$ from the exogenous pair $(\excMat,\mathbf{y})$ via~\eqref{eq:indirect} avoids the open-loop simulation issue in~\ref{item:D1}: for every $\theta$, $\approxOp^\theta \in \Lp^{SC}$ implies $\hat{\mathbf{y}}\in\lp$ whenever $\excMat\in\lp$, even if $\trueSys$ is unstable.
    Regarding~\ref{item:D2}, the predictor is driven by $\excMat$, assumed independent of $\noiseMat$, rather than by noise-corrupted measurements; as discussed after~\eqref{eq:indirect}, this reduces the bias to second order but does not eliminate it.
    Consistency is thus limited to the linear case, where $\mathbf{y}=\trueOp(\excMat)+(\trueOp\cont+\eyeOp)\noiseMat$ exactly and $\trueOp\cont+\eyeOp=(\eyeOp-\trueSys\cont)^{-1}$ is the output sensitivity of the true loop: with zero-mean finite-variance noise independent of $\excMat$, output-error identification is consistent irrespective of the noise spectrum, provided the model class contains $\trueOp$, the excitation identifies it, and the usual statistical regularity and global optimization conditions hold~\cite{Forssell99Revisited}. Direct methods, instead, can require a correctly specified noise model to avoid feedback-induced bias~\cite{Vandenhof98CLissues,ljung1999system}.
\end{remark}

\subsection{The Simulation Error Objective}\label{sec:SEM}
A fundamental advantage of the unconstrained optimization problem~\eqref{eq:ici_obj_theta} is that closed-loop stability is guaranteed entirely by the ICI parameterization, so the identification loss $c$ can be freely designed to incorporate different modeling objectives without jeopardizing the stability of the learned model. For instance, the empirical loss could embed physics-informed penalties to enforce known physical constraints, state-dependent weightings to prioritize accuracy near critical equilibrium points or safety boundaries, or negative log-likelihoods to capture non-Gaussian noise profiles. In this work we adopt the standard Simulation Error Method (SEM): when the intended use of the model is simulation or control design, the multi-step simulation error is the natural fitting criterion and the standard choice for training deep networks on dynamical systems~\cite{Pillonetto25NN}, whereas one-step criteria optimize more easily but transfer poorly to free-run simulation. We define the trajectory discrepancy $c$ as the Mean Squared Error (MSE) between predicted and measured outputs, averaged over both the finite horizon $T$ and the $d_y$ output coordinates:
\begin{equation} \label{eq:mse_loss}
    c \big( \theta, \tau_T \big) := \frac{1}{T d_y} \sum_{t=0}^{T-1} \| \hat{y}_t - y_t \|_2^2,
\end{equation}
where the sequence of predicted outputs $\hat{\mathbf{y}}_{0:T-1}$ is generated according to either the direct or indirect identification strategy defined in the previous subsections.
Substituting \eqref{eq:mse_loss} into \eqref{eq:empirical_loss} yields an unconstrained least-squares problem for each strategy, efficiently minimized by gradient-based methods since $\approxOp^{\theta}$ is differentiable.

\section{PAC-Bayesian training for improved generalization}\label{sec:PAC}
System identification must operate on a finite set of noisy closed-loop trajectories. This finite-sample setting introduces a generalization gap: a model that successfully minimizes the empirical risk on the training data may still perform poorly on unseen realizations of excitation and noise. The goal of this section is to quantify this gap using PAC-Bayesian theory, and to translate the resulting bound into a principled, generalization-aware training objective for the ICI operator $\approxOp$.

\subsection{PAC-Bayes Generalization Bound}
To certify the finite-sample reliability of the identified model, we seek to formally upper-bound the true expected loss~\eqref{eq:true_loss} using its empirical estimate~\eqref{eq:empirical_loss}. We achieve this via the PAC-Bayesian framework, which casts the learning process as an update from a data-independent prior distribution $\P$ over the model parameters $\theta$, to a data-dependent posterior distribution $\Q$ based on the observed dataset. 

Within our system identification framework, both the prior $\P$ and the posterior $\Q$ are defined over the parameters, $\theta$, of the ICI operator $\approxOp^{\theta}$.
Crucially, this directly integrates the PAC-Bayesian framework with the ICI architecture: because every $\theta \in \mathbb{R}^{d_\theta}$ maps to an operator $\hat{S}^\theta \in \mathcal{L}^{SC}_p$, any model configuration sampled from the posterior is guaranteed by construction to be stabilized by the controller $\cont$.

PAC-Bayesian theory requires a bounded loss:
\begin{assumption}\label{assumption-id}
    The finite-horizon identification loss in~\eqref{eq:mse_loss} is upper-bounded by $C\in \R^+$,
    i.e.,
    $c(\theta,\tau_T) \in [0, C)$, for all $\theta \in \Theta := \mathbb{R}^{d_\theta}$ and $\tau_T$ in the support of $\realdataset_{T}$.
\end{assumption}

An unbounded loss is mapped to $[0, C)$ via:
\begin{align}
    \Tilde{c}(\theta,\tau_T) \coloneqq C \tanh(c(\theta,\tau_T)/\gamma), \label{eq:loss_trans}
\end{align}
with $\gamma \in \R^+$: the map is near-linear on $[0, \gamma]$ and saturates smoothly to $C$,
so a suitable $\gamma$ keeps training in the linear regime. Throughout this section $J$ and
$\hat J$ are understood on the transformed loss $\Tilde c$.

We now bound the true identification risk $J$ in~\eqref{eq:true_loss}.
\begin{theorem}[Adapted from Theorem~2.7 in~\cite{userfriendly}]
\label{theo:pac-id}
    Consider the distribution of closed-loop trajectories $\realdataset_{T}$, a dataset $\mathbb{S}_T^N = \{ \tau_T^n \}_{n=0}^{N-1} = \bigl\{( \exc{0:\horizon-1}^\counter,u_{0:\horizon-1}^\counter,\trueNoisyOut{0:\horizon-1}^\counter ) \bigr\}_{\counter=0}^{\numSamples-1} \sim \realdataset_{T}$, a prior distribution $\P$ independent of $\mathbb{S}_T^N$, and the identification loss $c$.
    Under Assumption~\ref{assumption-id}, for every $\lambda>0$, confidence level $\delta \in (0,1)$, posterior distribution $\Q$, and model parameters $\theta \sim \Q$, the inequality
    \begin{align}
        J(\theta, \realdataset_{T})
        \leq \hat{J}(\theta, \mathbb{S}_T^N)
        + \frac{1}{\lambda}\Bigl(
            \ln \frac{d\Q}{d\P}(\theta) + \ln \frac{1}{\delta}
        \Bigr)
        + \frac{\lambda C^2}{8 N} \label{eq:bound-id}
    \end{align}
    holds with probability at least $1-\delta$ over simultaneously sampling the $N$ i.i.d. trajectories of $\dataset_{T}^N$ from $\realdataset_{T}$ and $\theta \sim \Q$. The true and empirical risks, $J$ and $\hat{J}$, are defined in~\eqref{eq:true_loss} and~\eqref{eq:empirical_loss} respectively.
\end{theorem}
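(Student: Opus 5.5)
The proof follows the standard "disintegrated" PAC-Bayes route of Catoni/Alquier, applied to the bounded transformed loss $\tilde c$.

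\textbf{Step 1: per-parameter exponential moment.} Fix $\theta\in\Theta$. The terms $c(\theta,\tau^n_T)$, $n=0,\dots,N-1$, are i.i.d. because the trajectories are i.i.d. draws from $\realdataset_T$. By Assumption~\ref{assumption-id} (or the transformation~\eqref{eq:loss_trans}) each term lies in $[0,C)$. Hoeffding's lemma applied to the average $\hat J(\theta,\dataset_T^N)$, whose summands have range $C/N$, gives
\begin{equation*}
\mathbb{E}_{\dataset_T^N}\exp\bigl(\lambda(J(\theta,\realdataset_T)-\hat J(\theta,\dataset_T^N))\bigr)\le \exp\Bigl(\frac{\lambda^2C^2}{8N}\Bigr).
\end{equation*}

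\textbf{Step 2: integrate against the prior.} Since $\P$ does not depend on the data, Fubini lets us integrate the bound of Step 1 over $\theta\sim\P$:
\begin{equation*}
\mathbb{E}_{\dataset}\mathbb{E}_{\theta\sim\P}\, e^{\lambda(J-\hat J)-\lambda^2C^2/(8N)}\le 1.
\end{equation*}

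\textbf{Step 3: change of measure to the posterior.} We now pass from $\P$ to the (possibly data-dependent) $\Q$. This step is the main subtlety, because the statement is disintegrated: it holds for a single draw $\theta\sim\Q$, not on average over $\Q$. We therefore use the Radon--Nikodym identity $\mathbb{E}_{\P}[f]\ge\mathbb{E}_{\Q}[f\,(d\Q/d\P)^{-1}]$, with equality when $\Q\ll\P$; we assume $\Q\ll\P$, and otherwise the bound is trivial because $\ln(d\Q/d\P)=\infty$. This yields
\begin{equation*}
\mathbb{E}_{\dataset}\mathbb{E}_{\theta\sim\Q}\exp\Bigl(\lambda(J-\hat J)-\frac{\lambda^2C^2}{8N}-\ln\frac{d\Q}{d\P}(\theta)\Bigr)\le 1 .
\end{equation*}
This holds for any Markov kernel from data to posteriors, since the integrand is nonnegative.

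\textbf{Step 4: conclude with Markov's inequality.} Apply Markov's inequality to the nonnegative random variable inside the expectation of Step 3, taken over the joint law of $(\dataset,\theta)$. With probability at least $1-\delta$ the exponent is at most $\ln(1/\delta)$. Dividing by $\lambda>0$ and rearranging gives~\eqref{eq:bound-id}.

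The closed-loop setting introduces nothing new beyond checking the i.i.d. and boundedness hypotheses. Independence across experiments holds by construction of the dataset. Boundedness is either Assumption~\ref{assumption-id} itself or is enforced by the $\tanh$ map, which keeps $\tilde c\in[0,C)$. Measurability of $\theta\mapsto c(\theta,\tau_T)$ is needed for Fubini in Step 2. It holds because $\approxOp^\theta$ is differentiable in $\theta$ and $\hat y$ is generated by a finite recursion.
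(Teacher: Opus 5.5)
Your argument is correct and is essentially the standard proof of the disintegrated bound (Theorem~2.7 of the cited survey), which the paper invokes without reproducing, treating each closed-loop trajectory as one i.i.d.\ sample: Hoeffding's lemma for fixed $\theta$ using the bound $C$, Fubini under the data-independent prior $\P$, change of measure to $\Q$, and Markov's inequality over the joint law of $(\dataset_T^N,\theta)$. One small imprecision is that $\Q\ll\P$ yields the inequality $\mathbb{E}_{\P}[f]\ge\mathbb{E}_{\Q}[f\,(d\Q/d\P)^{-1}]$ for $f\ge 0$, with equality only if additionally $\P\ll\Q$; since only the inequality is used, the proof is unaffected.
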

The bound combines the empirical risk, a log-density ratio measuring the deviation of $\Q$ from $\P$ at $\theta$, and a $\theta$-independent complexity term. It adapts Theorem~2.7 in~\cite{userfriendly} to time-series data by treating independent closed-loop trajectories as the i.i.d.\ samples, so that $c$ compares entire predicted and observed sequences.
Standard PAC-Bayesian bounds control the risk averaged over $\Q$, i.e., the risk of a randomized predictor that redraws $\theta \sim \Q$ at every use, which is at odds with deploying one identified model. Theorem~\ref{theo:pac-id} is instead a disintegrated bound: it holds for a single parameter drawn once from the posterior, $\theta \sim \Q$, matching the identification objective of deploying one model.

\subsection{Optimal Posterior}\label{subsec:qstar-id}
Unlike Bayesian inference, $\Q$ need not come from Bayes' theorem: we select it to minimize the expected value, over $\theta \sim \Q$, of the upper bound in~\eqref{eq:bound-id}:
\begin{align}
    \Q^* \coloneqq& \argmin_{\Q} \E_{\theta \sim \Q}  \Bigl[\hat{J}(\theta, \dataset_{T}^N) 
    + \frac{1}{\lambda} \Bigl( \ln \frac{d\Q}{d\P}(\theta)
    + \ln \frac{1}{\delta} \Bigr) 
    + \frac{\lambda C^2}{8 N} \Bigr] \nonumber \\
    =& \argmin_{\Q} \E_{\theta \sim \Q}  \Bigl[\hat{J}(\theta, \dataset_{T}^N) 
    + \frac{1}{\lambda} \ln \frac{d\Q}{d\P}(\theta)\Bigr] \nonumber\\
    =& \argmin_{\Q} \E_{\theta \sim \Q}  \Bigl[\hat{J}(\theta, \dataset_{T}^N) \Bigr] +
    \frac{1}{\lambda} \KL\bigl(\Q \Vert \P \bigr) 
    ,\label{eq:def_qstar}
\end{align}
where $\KL$ stands for the Kullback–Leibler divergence.
The objective trades off, through $\lambda$, empirical risk against closeness to the prior (which can encode structural knowledge or regulate complexity). In the two limiting regimes, $\Q^*$ concentrates on the minimizers of $\hat{J}$ as
$\lambda \to \infty$, and $\Q^* \to \P$ as $\lambda \to 0$.

\begin{corollary}[Lemma~1.1.3 in~\cite{catoni2007pac}]\label{corol:qstar-id}
    The minimizer of~\eqref{eq:def_qstar} is the \textit{Gibbs} distribution, characterized by:
    \begin{align}
    \Q^*(\theta)
    &= \P(\theta) \,
    e^{-\lambda \hat{J}(\theta, \dataset_{T}^N)}
    / Z_\lambda(\P,\dataset_{T}^N) \label{eq:qstar-id},\\
    Z_\lambda(\P,\dataset_{T}^N) &\coloneqq \E_{\theta \sim \P}
    e^{-\lambda \hat{J}(\theta, \dataset_{T}^N)}, \label{eq:Z-id}
    \end{align}
    where $Z_\lambda(\P,\dataset_{T}^N)$ normalizes the distribution $\Q^*$.
\end{corollary}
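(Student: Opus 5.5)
The plan is to rewrite the objective in~\eqref{eq:def_qstar} as a Kullback–Leibler divergence to the Gibbs distribution~\eqref{eq:qstar-id}, up to a constant that does not depend on $\Q$. Since $\hat{J}$ is computed on the transformed loss $\Tilde c \in [0,C)$, we have $0 \le \hat{J}(\theta,\dataset_T^N) < C$ for all $\theta$. Hence $e^{-\lambda \hat{J}} \in (e^{-\lambda C},1]$, and $Z_\lambda(\P,\dataset_T^N)$ in~\eqref{eq:Z-id} is finite and strictly positive. So $\Q^*$ in~\eqref{eq:qstar-id} is a well-defined probability measure. It is equivalent to $\P$, with density
\begin{equation*}
\frac{d\Q^*}{d\P}(\theta) = \frac{e^{-\lambda \hat{J}(\theta,\dataset_T^N)}}{Z_\lambda(\P,\dataset_T^N)},
\end{equation*}
which is bounded above and below by positive constants.

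Next, restrict attention to posteriors $\Q \ll \P$. If $\Q$ is not absolutely continuous with respect to $\P$, then $\KL(\Q\Vert\P)=+\infty$ and such $\Q$ cannot be minimizers, because $\Q=\P$ gives a finite value. For $\Q \ll \P$, we also have $\Q \ll \Q^*$, and the chain rule gives $\ln\frac{d\Q}{d\P} = \ln\frac{d\Q}{d\Q^*} + \ln\frac{d\Q^*}{d\P}$. Substituting the density above, integrating against $\Q$, and multiplying by $\lambda>0$ yields
\begin{equation*}
\lambda\,\E_{\theta\sim\Q}\bigl[\hat{J}(\theta,\dataset_T^N)\bigr] + \KL(\Q\Vert\P) = \KL(\Q\Vert\Q^*) - \ln Z_\lambda(\P,\dataset_T^N).
\end{equation*}
This is the Donsker–Varadhan identity. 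The $\hat{J}$ terms cancel exactly, and because $\hat{J}$ is bounded all expectations involved are finite. The one point that needs some care is when $\KL(\Q\Vert\P)=\infty$; in that case both sides are infinite, so the identity still holds.

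Finally, the term $-\ln Z_\lambda$ does not depend on $\Q$, and $\KL(\Q\Vert\Q^*) \ge 0$ with equality if and only if $\Q=\Q^*$ (Gibbs' inequality). Therefore $\Q^*$ is the unique minimizer of~\eqref{eq:def_qstar}, and the minimal value is $-\frac{1}{\lambda}\ln Z_\lambda$. The main obstacle is measure-theoretic rather than computational: we must justify that $Z_\lambda$ is positive and finite and handle posteriors outside the domain of finite divergence. The boundedness guaranteed by Assumption~\ref{assumption-id} and the transformation~\eqref{eq:loss_trans} settles both issues.
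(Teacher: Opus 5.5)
Your proposal is correct and matches the standard argument for the cited result: the paper gives no proof of its own and defers to Catoni's Lemma~1.1.3, whose proof is exactly the identity $\lambda\,\E_{\theta\sim\Q}[\hat J] + \KL(\Q\Vert\P) = \KL(\Q\Vert\Q^*) - \ln Z_\lambda$ followed by nonnegativity of the KL divergence. Your treatment of the technical details is sound and slightly more careful than the paper's density-based statement: you obtain positivity and finiteness of $Z_\lambda$ from the boundedness of the transformed loss, and you rule out posteriors that are not absolutely continuous with respect to $\P$.
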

Substituting the optimal posterior~\eqref{eq:qstar-id} into~\eqref{eq:bound-id} yields the tightest bound:
\begin{align}
    J(\theta, \realdataset_{T})
    \leq
    \frac{1}{\lambda} \Bigl( \ln \frac{1}{\delta}
    - \ln Z_\lambda(\P,\dataset_{T}^N) \Bigr)
    + \frac{\lambda C^2}{8 N} ,
    \label{eq:bound_qstar-id}
\end{align}
holding with probability at least $1-\delta$ over simultaneously sampling $\dataset_{T}^N$ and $\theta \sim \Q^*$.
The term $- 1/\lambda \ln Z_\lambda(\P,\dataset_{T}^N)$ lies in $[0, C)$ and decreases when the (data-independent) prior $\P$ assigns more mass to parameters with lower empirical risk~\eqref{eq:Z-id}: the bound tightens when $\P$ is ``aligned'' with $\dataset_{T}^N$. The bound~\eqref{eq:bound_qstar-id} holds for the optimal posterior $\Q^*$ and does not depend on the drawn $\theta$. Since $\Q^*$ minimizes the expectation over $\theta \sim \Q$ of the right-hand side of~\eqref{eq:bound-id}, it yields the tightest instance of~\eqref{eq:bound-id} in expectation over the posterior draw, though not necessarily for every realization of $\theta$.

Tightening~\eqref{eq:bound_qstar-id} amounts to tuning $\lambda>0$, which Theorem~\ref{theo:pac-id} leaves free. Since $-1/\lambda \ln Z_\lambda(\P, \dataset_{T}^N)$ requires an expectation~\eqref{eq:Z-id} and lies in $[0, C)$ under Assumption~\ref{assumption-id}, we replace it by $C$ and select $\lambda$ by minimizing the resulting looser bound. 

\begin{proposition}\label{prop:lambda}
    The optimal $\lambda>0$ that minimizes the upper bound on the right-hand side of~\eqref{eq:bound_qstar-id} obtained by replacing $-1/\lambda \ln Z_\lambda(\P, \dataset_{T}^N)$ with $C$, is given by:
    \begin{align}
        \lambda^* = \frac{\sqrt{8 N \ln (1/\delta)}}{C}. \label{eq:lambdastar-id}
    \end{align}
\end{proposition}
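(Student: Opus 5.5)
Substituting $C$ for $-\frac{1}{\lambda}\ln Z_\lambda(\P,\dataset_{T}^N)$ in~\eqref{eq:bound_qstar-id} does not remove all $\lambda$-dependence. The remaining $\lambda$-dependent part is
\begin{equation*}
    B(\lambda) := \frac{\ln(1/\delta)}{\lambda} + \frac{\lambda C^2}{8N},
\end{equation*}
and the replaced term contributes only the additive constant $C$. So I will minimize $B$ over $\lambda>0$; the constant does not affect the minimizer.

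First, I will note that $\delta\in(0,1)$ gives $a:=\ln(1/\delta)>0$, and set $b:=C^2/(8N)>0$. Then $B(\lambda)=a/\lambda+b\lambda$, which is strictly convex on $(0,\infty)$ because $B''(\lambda)=2a/\lambda^3>0$. It also tends to $+\infty$ both as $\lambda\to 0^+$ and as $\lambda\to\infty$. Hence $B$ has a unique global minimizer on $(0,\infty)$, and that minimizer is characterized by the first-order condition.

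Next, I will solve $B'(\lambda)=-a/\lambda^2+b=0$. This gives $\lambda^*=\sqrt{a/b}=\sqrt{8N\ln(1/\delta)}/C$, which is exactly~\eqref{eq:lambdastar-id}. As a check, the AM--GM inequality $a/\lambda+b\lambda\ge 2\sqrt{ab}$, with equality iff $a/\lambda=b\lambda$, confirms the same minimizer without calculus. It also gives the optimized value $C+C\sqrt{\ln(1/\delta)/(2N)}$ of the loosened bound.

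There is no real obstacle here. The only points needing care are positivity of $\ln(1/\delta)$, which makes the minimizer interior and well defined, and the observation that the replacement by $C$ makes the bound $\lambda$-dependent only through $B$.
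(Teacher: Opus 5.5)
Your proposal is correct and takes essentially the paper's approach. After the substitution, the bound becomes $C + \ln(1/\delta)/\lambda + \lambda C^2/(8N)$, and the paper gives no separate written proof; the intended argument is exactly your first-order-condition minimization, with your convexity/coercivity check and the AM--GM cross-check making explicit what the paper takes for granted.
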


The same $\lambda^*$ is obtained by replacing $-1/\lambda \ln Z_\lambda(\P, \dataset_{T}^N)$ with the tighter bound $\E_{\theta \sim \P}\hat{J}(\theta, \dataset_{T}^N)$, which follows from Jensen's inequality and is likewise independent of $\lambda$. Crucially, $\lambda^*$ does not depend on $\dataset_{T}^N$, as required for Theorem~\ref{theo:pac-id} to hold with $\lambda = \lambda^*$.

Given $\P$ and $\dataset_{T}^N$,~\eqref{eq:qstar-id} and~\eqref{eq:lambdastar-id} fully specify the optimal posterior. In practice, hyperparameters of the prior (e.g., its width) and $\lambda$ may also be selected from a finite, data-independent set of $m$ candidates by minimizing the bound: by a union bound over the candidates,~\eqref{eq:bound_qstar-id} holds for the selected pair with probability at least $1-\delta$ upon replacing $\delta$ with $\delta/m$, including in~\eqref{eq:lambdastar-id}.

\subsection{Sampling from the posterior}
\label{subsec:numerical_SVGD}
The optimal posterior $\Q^*$ in~\eqref{eq:qstar-id} is known only up to the normalization
constant $Z_\lambda(\P, \dataset_{T}^N)$, and sampling from unnormalized distributions is
generally difficult: Markov chain Monte Carlo methods are applicable in principle, but are
typically expensive for large datasets~\cite{green2015bayesian}.
Following~\cite{Alquier2016variational}, we adopt instead a variational approximation, in which
$\Q^*$ is replaced by a tractable distribution that is easy to sample from. Specifically, we
employ Stein Variational Gradient Descent (SVGD)~\cite{SVGD}, which represents the posterior by
$k \in \mathbb{N}$ particles $\{\phi_1, \dots, \phi_k\}$. Each particle $\phi_\kappa$,
$\kappa = 1, \dots, k$, collects the parameters of one model, i.e., $\phi_\kappa \in \thetaset$;
the particles are initialized from the prior and iteratively updated by gradient descent, so
that their empirical density in parameter space approximates $\Q^*$. Since the updates are
gradient-based, SVGD integrates naturally with our unconstrained parameterization. After
training, $\Q^*$ is approximated by $\hat{\Q}_k$, the uniform distribution over the resulting
particles, so that sampling from the approximate posterior amounts to selecting one particle
uniformly at random. The use of SVGD for approximating Gibbs posteriors arising in PAC-Bayesian methods
is well established, e.g.~\cite{PACOH,PACPFL} and we refer the reader to~\cite{SVGD} for further implementation details.
One caveat is worth noticing: Theorem~\ref{theo:pac-id} and its
specialization~\eqref{eq:bound_qstar-id} certify a \emph{single} draw from $\Q^*$, whereas SVGD
provides the particle approximation $\hat{\Q}_k$. The value of~\eqref{eq:bound_qstar-id} depends only on $\P$ and $\dataset_{T}^N$, not on the particles, and~\eqref{eq:bound-id} cannot be applied to $\hat{\Q}_k$ directly, since a discrete particle distribution is not absolutely continuous with respect to $\P$. The certificate therefore refers to $\Q^*$; the deployed particle is an approximate draw from it, and the discrepancy between $\Q^*$ and $\hat{\Q}_k$ is not quantified by the bound.

\section{Numerical experiments}
\label{sec:experiments}

We evaluate the ICI framework across four scenarios validating: (i) the identification of open-loop unstable plants with guaranteed closed-loop stability, (ii) accuracy against standard baselines on stable systems, (iii) context-aware identification across environment families, and (iv) the generalization benefits of PAC-Bayesian training under data scarcity.
Section~\ref{subsec:setup} describes the benchmark systems, the compared strategies, the evaluation metric, and the architectures, sample sizes, and signal levels of each experiment; layer widths, learning rates, whitening, training safeguards, per-seed protocols, and the complete result grids are in the public repository at \url{https://github.com/DecodEPFL/Nonlinear_system_identification_PAC.git}.

\subsection{Experimental Setup}
\label{subsec:setup}

\textbf{Target Systems.}
Throughout the experiments, we benchmark our methods on two distinct dynamical systems:
\begin{enumerate}
    \item A nonlinear, open-loop unstable system defined by the state-space equations:
    \begin{equation}
    \begin{cases}
    \trueState{t+1} = 3\trueState{t} + \tanh(\trueState{t}) + \trueIn{t}, \quad \trueState{0} = \bar{\trueState{}} \\
    \trueNoisyOut{t} = \trueState{t} + \noise{t}
    \end{cases}
    \label{eq:scalar_sys}
    \end{equation}
Being open-loop unstable, the system is operated in closed loop during data collection,
stabilized by the nonlinear controller
$
    \contFunc{t}(\trueNoisyOut{0:t}) = -3.5\,\trueNoisyOut{t}+0.3\tanh(\trueNoisyOut{t}).
$
This controller satisfies the hypotheses of Theorem~\ref{theo:ici} globally:
$\cont(\zeroMat)=\zeroMat$, and
$|\contFunc{}'(\trueNoisyOut{})| = |-3.5+0.3\sech^2(\trueNoisyOut{})| \leq 3.5$ for every
$\trueNoisyOut{}$, so $\cont$ is incrementally $\ell_p$-stable with incremental gain
$\gamma_\cont = 3.5$.
Together with $\cont(\zeroMat) \in \lp$ this yields $\cont \in \Lp^C$.
The system input is then $\trueIn{t} = \contFunc{t}(\trueNoisyOut{0:t}) + \exc{t}$.

\item A stable planar robot tracking a target, modeled as a point mass $m \in \mathbb{R}_+$~\cite{PerfBoost}. The state comprises position $\trueState{1,t} \in \R^2$ and velocity $\trueState{2,t} \in \R^2$, the output is a noisy position measurement, and the input is a planar force:
\begin{align*}
\begin{cases}
    \begin{bmatrix}
        \trueState{1,t+1} \\ \trueState{2,t+1}
    \end{bmatrix}
    =
    \begin{bmatrix}
        \trueState{1,t} \\ \trueState{2,t}
    \end{bmatrix}
    + T_s
    \begin{bmatrix}
        \trueState{2,t} \\
        \frac{1}{m} \bigl(-C(x_{1,t}, x_{2,t}) + \trueIn{t} \bigr)
    \end{bmatrix}, \\
    \trueNoisyOut{t} = \trueState{1,t} + \noise{t},
\end{cases}
\end{align*}

where $T_s > 0$ is the sampling time, and the function $C(x_{1,t}, x_{2,t}) = b_1(x_{1,t}) x_{2,t} + b_2 \|x_{2,t}\|_2 x_{2,t}$, with $0 < b_2 < b_1(x_{1,t}) \;\forall x_{1,t} \in \R^2$, models nonlinear drag forces, such as air or water resistance. The coefficient $b_1(x_{1,t})$ can represent either a uniform environment or a position-dependent spatial friction map.
The robot is subject to nonlinear drag. To regulate its position toward the target $\trueNoisyOut{}^* = \zeroMat \in \R^2$, we apply a static proportional controller with gain matrix $K = \operatorname{diag}(\kappa_1, \kappa_2)$, where $\kappa_1, \kappa_2 > 0$. The control law is $\contFunc{t}(\trueNoisyOut{0:t}) = K(\trueNoisyOut{}^* - \trueNoisyOut{t})$, and the system input is $\trueIn{t} = \contFunc{t}(\trueNoisyOut{0:t}) + \exc{t}$.
\end{enumerate}

Unless otherwise specified, the reference excitation $r_t$ is independently sampled at each time step from a zero-mean Gaussian distribution, $\exc{t} \sim \mathcal{N}(0, \sigma_r^2 \eye)$, and the output is corrupted by measurement noise $v_t\sim\mathcal N(0,\sigma_v^2\eye)$; both are $\lp$ on the finite horizon $\horizon$ of each trajectory, so Theorem~\ref{theo:ici} applies.
While persistent excitation is most commonly defined and analyzed in linear settings, in this nonlinear setting Gaussian excitation qualitatively promotes exploration of an informative range of the state space~\cite{ljung1999system}.

\textbf{Baselines.}
To highlight the benefits of the ICI stability certificate, we benchmark our approaches (both the indirect strategy~\eqref{eq:indirect} and the direct counterpart~\eqref{eq:direct}) against two uncertified baselines that minimize the simulation error~\eqref{eq:mse_loss}.
Instead of incorporating structural guarantees, these baselines parameterize $\approxSys$ as a generic neural network, discarding the prior knowledge that the true plant is stabilized by $\cont$.
The two baseline strategies differ fundamentally in how the networks are simulated and, consequently, in which data drive the predictor:

\begin{itemize}
    \item \textbf{Direct OE} (Direct Output Error).
This is the standard direct identification baseline,
widely used in neural system
identification~\cite{Pillonetto25NN,ribeiro2020smoothness}.
A generic neural model $\approxSys_\theta$ is trained
to minimize the simulation error on the
recorded input $\mathbf{u}$:
\begin{align}
    \min_{\theta \in \mathbb{R}^{d_\theta}} \;\;
    &\frac{1}{N} \sum_{n=0}^{N-1} \left[
        \frac{1}{T} \sum_{t=0}^{T-1}
        \frac{1}{d_y}\|\hat{y}_t^n - y_t^n\|_2^2
    \right]
    \label{eq:directOE} \\
    \text{s.t.} \;\;
    &\hat{y}_t^n
       = \hat{G}_t^\theta\bigl(u_{0:t-1}^n\bigr),
    \nonumber \\
    &\forall\, t{=}0,\dots,T{-}1,
    \;\; \forall\, n{=}0,\dots,N{-}1.
    \nonumber
\end{align}
\item \textbf{CLOE} (Closed-Loop Output Error).
This strategy~\cite{landau1997cloe,landau2001recursive} simulates the closed-loop system driven by the exogenous reference $\mathbf{r}$. By placing the neural model $\approxSys_\theta$ in feedback with the known controller $\mathbf K$, thus closing the loop around the model's own predictions, CLOE avoids feeding the noisy recorded input into the predictor:
        \begin{align}
            \min_{\theta \in \mathbb{R}^{d_\theta}}\;\; &\frac{1}{N} \sum_{n=0}^{N-1} \left[ \frac{1}{T} \sum_{t=0}^{T-1} \frac{1}{d_y}\| \hat{y}_t^n - y_t^n \|_2^2\right] \label{eq:baseline}
            \\
            \text{s.t.} \; &\hat{y}_t^n = \hat{G}_{t}^\theta \bigl( r_{0:t-1}^n+ K_{0:t-1} (\hat{y}_{0:t-1}^n) \bigr) \nonumber, \\
            & \forall t=0,\dots,T-1, \; \forall n=0,\dots,N-1. \nonumber
        \end{align}
\end{itemize}

\textbf{Evaluation metrics.}
All identification strategies are evaluated using the FIT metric on a fixed, noise-free test set, common to every method within each training condition, in both open-loop (OL) and closed-loop (CL) settings; measurement noise affects the training data only, so the test isolates the quality of the identified dynamics.

Goodness of fit is computed per test trajectory $\counter$ as $\mathrm{FIT}^{\counter}=100\bigl(1-\Vert \trueNoisyOut{0:\horizon_{test}-1}^{\counter}-\hat{y}^{\counter}_{0:\horizon_{test}-1}\Vert_2/\Vert \trueNoisyOut{0:\horizon_{test}-1}^{\counter}-\bar{\trueNoisyOut{}}^{\counter}\Vert_2\bigr)$ [\%], where $\trueNoisyOut{t}^{\counter}$ is the noise-free true output, $\hat{y}_{t}^{\counter}$ the model's, and $\bar{\trueNoisyOut{}}^{\counter}$ the temporal mean of trajectory $\counter$, subtracted at every $t$; FIT ranges in $(-\infty,100]$~\cite{Pillonetto25NN}. For open-loop-unstable plants, the open-loop comparison is shown graphically (on a logarithmic scale); closed-loop FIT/RMSE are computed over the bounded rollouts only, with the divergence rate reported separately.

We report the mean $\pm$ one standard deviation \emph{across the $N_{test}$ test trajectories}; training seeds are fixed and all protocols are deterministic given the seed, so every number is exactly reproducible from the public repository.

\textbf{Models and datasets.}
\begin{table*}[t]
\centering
\caption{Model classes and data of the four experiments. $N$, $T$ are the number and length
of the training trajectories, $\sigma_r$, $\sigma_v$ the excitation and measurement-noise
levels, and $N_{test}$, $\horizon_{test}$ the number and length of the test trajectories.
Initial conditions are $\bar{\trueState{}}\sim\mathcal U(-2,2)$ (training) and
$\bar{\trueState{}}\sim\mathcal U(-4,4)$ (redeployment test) in
Section~\ref{subsec:exp_unstable}, and $p_0\sim\mathcal U([-10,10]^2)$ with zero initial
velocity elsewhere. The redeployment test of Section~\ref{subsec:exp_unstable} is repeated at
$\sigma_v \in \{0,\,0.3,\,0.6\}$ for each of the $S=40$ training runs per strategy.}
\label{tab:app_details}
\footnotesize
\setlength{\tabcolsep}{5pt}
\resizebox{\textwidth}{!}{%
\begin{tabular}{@{}lllllrrllrr@{}}
\toprule
Section & System & Observation & ICI operator $\approxOp$ & Baseline class & $N$ & $T$ & $\sigma_r$ & $\sigma_v$ & $N_{test}$ & $\horizon_{test}$\\
\midrule
\ref{subsec:exp_unstable} & 1 & full state & deep SSM & ReLU RNN & $40$ & $100$ & $0.5$ & $0.1$ & $100$ & $200$\\
\ref{subsec:exp_robot} & 2 & positions & deep SSM & LSTM & $200$ & $100$ & $\{0.5,\dots,20\}$ & $\{0,\dots,2\}$, AR(1) & $100$ & $100$\\
\ref{subsec:exp_context} & 2 & full state & deep SSM  & deep SSM, no context & $60$ & $50$ & $20$ & $0.1$ & $100$/$20$ & $50$\\
\ref{subsec:exp_pacbayes} & 2 & positions & REN  & --- & $1$--$64$ & $50$ & $20$ & $0.5$ & $100$ & $50$\\
\bottomrule
\end{tabular}}
\end{table*}

Table~\ref{tab:app_details} collects the model classes, sample sizes, and signal levels of the four experiments.

\subsection{Feedback-Stabilized Unstable Scalar System}
\label{subsec:exp_unstable}
This experiment validates on the unstable plant~\eqref{eq:scalar_sys} that
(a)~every identified ICI model $\approxSysICI$ is guaranteed to be stabilized by the data-collection controller, whereas a model returned by an uncertified method may not be, and (b)~using the indirect approach training does not diverge and recovers the open-loop unstable dynamics. The two ICI strategies use a contractive SSM as the operator $\approxOp$, while the two baselines use a vanilla RNN with unbounded ReLU activations, a class chosen because saturating networks can't represent unstable dynamics and would fail the second question by construction. The baselines are given more capacity than the ICI models, so that no conclusion can be attributed to a capacity advantage of the latter.

\textbf{Results.}
Table~\ref{tab:unstable} summarizes the structural properties of each method, while Figure~\ref{fig:unstable_ol_bands} shows the open-loop behavior. The analysis yields two main observations:

\emph{(a) Only the ICI framework guarantees closed-loop stability.}
No ICI model destabilized the loop in any of the $S=40$ runs, at any noise level, as
Theorem~\ref{theo:ici} guarantees. 
Among the uncertified methods, Direct OE returned a model that $\cont$ fails to stabilize in $25\%$ of the runs, with the loop diverging from a large fraction of the initial conditions. CLOE with the ReLU-RNN instead generates an unstable closed loop in $2.5\%$ of the runs.

\emph{(b) Indirect ICI ensures well-posed training and identifies unstable dynamics.}
As shown in Figure~\ref{fig:unstable_ol_bands}, the true plant exhibits unstable open-loop trajectories, which are accurately reproduced by indirect ICI. Both
direct strategies fail in every run: as anticipated by Drawback~\ref{item:D1}, they simulate the
candidate model on the recorded input, and for unstable models the finite-cost region shrinks
exponentially around the true plant~\cite{ribeiro2020smoothness}, so unconstrained gradient
descent diverges. CLOE avoids open-loop simulation, but its rollout stays bounded only while gradient updates keep the model stabilized by $\cont$, which nothing guarantees. For a fair comparison, we therefore equip these methods with heuristic training safeguards (horizon curricula, gradient clipping, learning-rate backoff~\cite{ribeiro2020smoothness}); with them, the direct approaches converge, but to stable surrogates that miss the open-loop instability. In contrast, indirect ICI restricts the search space to models that $\mathbf{K}$ certifiably stabilizes, keeping every training rollout bounded and recovering the
unstable dynamics without such heuristics.

\begin{table}[tb]
\centering
\caption{Unstable scalar plant~\eqref{eq:scalar_sys}: identifiability of the
unstable plant, certified closed-loop stability with the known controller,
and fraction of the $S=40$ independent training runs whose model the
data-collection controller fails to stabilize ($\sigma_{\noise{}}=0.3$).}
\label{tab:unstable}
\footnotesize
\begin{tabular}{l c c c}
\toprule
Strategy & \shortstack{Identifiability\\of unstable \textbf{G}} & \shortstack{Certified \\CL stability} & \shortstack{Unstable\\CL runs}\\
\midrule
Indirect ICI & \checkmark  & \textbf{\checkmark } & $0\%$\\
Direct ICI  & --- & \textbf{\checkmark } & $0\%$\\
CLOE & \checkmark & --- & $2.5\%$\\
Direct OE & --- & --- & $25\%$\\
\bottomrule
\end{tabular}
\end{table}

\begin{figure}[t]
\centering
\includegraphics[width=.8\linewidth]{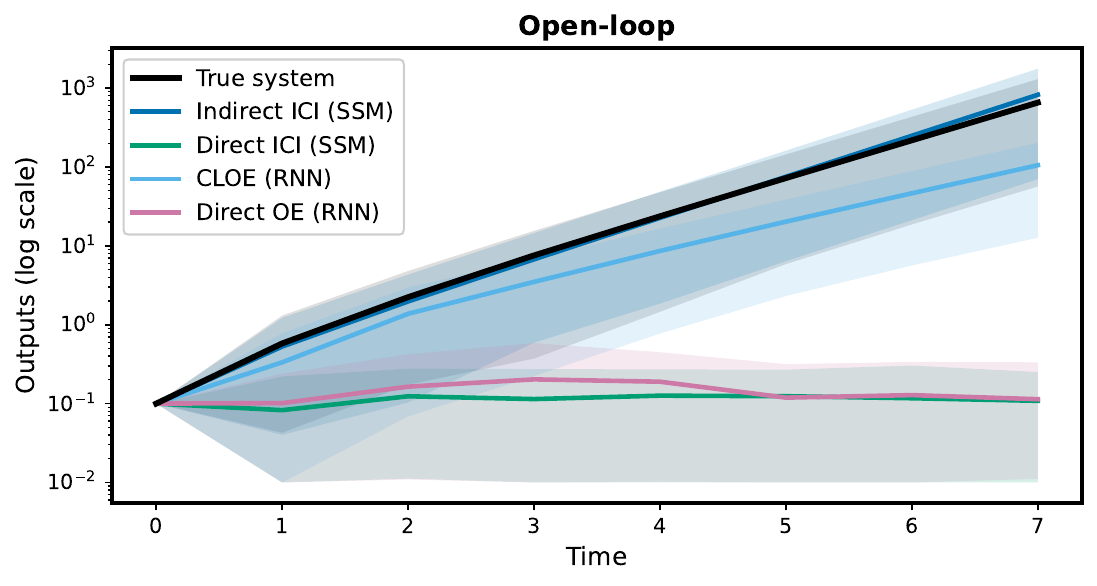}
\caption{Unstable scalar plant. Open-loop trajectories (log scale, $8$ steps) from $\bar{\trueState{}}=0.1$ of the models returned by $S=40$ independent training runs
(solid $=$ median over runs, shaded $=$ $10$--$90$ percentile band); the indirect strategies reproduce the diverging behavior, while both direct strategies return an inert surrogate.}
\label{fig:unstable_ol_bands}
\end{figure}

\subsection{Point-Mass Robot under Partial Observation}
\label{subsec:exp_robot}
This experiment evaluates the accuracy of the ICI strategies against the
generic-network baselines on the stable point-mass robot of Section~\ref{subsec:setup} in a uniform
environment ($b_1(x_{1,t}) \equiv b_1$), under partial observation: only the (noisy)
positions are measured, the velocities remain hidden.
We evaluate the closed-loop identification performance through two distinct parameter sweeps: varying the reference excitation level $\sigma_r$ on noise-free data, and varying the level of colored measurement noise $\sigma_v$ under constant excitation.
The two ICI strategies share the same certified deep SSM as the operator $\approxOp$ in~\eqref{eq:direct} and~\eqref{eq:indirect}, and the two baselines share a generic LSTM as the model $\approxSys_\theta$ in~\eqref{eq:directOE} and~\eqref{eq:baseline}.

\begin{table*}[!t]
\caption{Point-mass robot under partial observation (Section~\ref{subsec:exp_robot}): open-loop and closed-loop FIT [\%], mean $\pm$ std over the $100$ test trajectories. \emph{Top:} nominal excitation $\sigma_r{=}10$ and AR(1) measurement noise ($\rho{=}0.9$) of increasing level $\sigma_v$. \emph{Bottom:} noise-free training ($\sigma_v{=}0$) and increasing excitation level $\sigma_r$; at $\sigma_r{=}0.5$ the LSTM baselines fall below $-400\%$ open-loop FIT. Colored cells indicate the best result per column (ties are highlighted in both entries). Both sweeps are shown in Figure~\ref{fig:robot_sweeps}; the complete grids, including $\sigma_r \in \{0.5, 20\}$, are in the repository.}
\label{tab:robot_summary}
\vspace{1mm}
\begin{center}
\resizebox{0.9\textwidth}{!}{
\begin{tabular}{lcccccccc}
\toprule
& \multicolumn{2}{c}{$\sigma_v{=}0.1$} & \multicolumn{2}{c}{$\sigma_v{=}0.5$} & \multicolumn{2}{c}{$\sigma_v{=}1$} & \multicolumn{2}{c}{$\sigma_v{=}2$}\\
\cmidrule(lr){2-3}\cmidrule(lr){4-5}\cmidrule(lr){6-7}\cmidrule(lr){8-9}
\textbf{Strategy} & OL & CL & OL & CL & OL & CL & OL & CL\\
\midrule
Direct OE (LSTM)   & $72.3 \pm 13.3$ & $83.0 \pm 7.2$ & $66.3 \pm 17.2$ & $80.7 \pm 7.9$ & $45.9 \pm 22.5$ & $72.4 \pm 11.9$ & $-36.2 \pm 55.0$ & $43.5 \pm 30.7$\\
CLOE (LSTM)        & $78.9 \pm 10.6$ & $87.0 \pm 5.4$ & $70.2 \pm 16.5$ & $82.8 \pm 8.4$ & $39.1 \pm 34.1$ & $77.1 \pm 14.5$ & $-86.1 \pm 123.0$ & $30.3 \pm 54.8$\\
Direct ICI (SSM)   & $82.9 \pm 6.5$ & \cellcolor{\bestres}$92.1 \pm 3.1$ & $79.0 \pm 6.8$ & \cellcolor{\bestres}$91.7 \pm 3.0$ & $62.0 \pm 9.7$ & \cellcolor{\bestres}$90.2 \pm 3.0$ & $-7.0 \pm 33.6$ & \cellcolor{\bestres}$87.8 \pm 4.3$\\
Indirect ICI (SSM) & \cellcolor{\bestres}$86.7 \pm 5.1$ & \cellcolor{\bestres}$92.0 \pm 4.4$ & \cellcolor{\bestres}$84.3 \pm 6.0$ & \cellcolor{\bestres}$91.7 \pm 3.7$ & \cellcolor{\bestres}$74.0 \pm 10.0$ & $89.6 \pm 3.9$ & \cellcolor{\bestres}$60.9 \pm 15.7$ & $82.7 \pm 6.1$\\
\midrule
& \multicolumn{2}{c}{$\sigma_r{=}1$} & \multicolumn{2}{c}{$\sigma_r{=}2$} & \multicolumn{2}{c}{$\sigma_r{=}5$} & \multicolumn{2}{c}{$\sigma_r{=}10$}\\
\cmidrule(lr){2-3}\cmidrule(lr){4-5}\cmidrule(lr){6-7}\cmidrule(lr){8-9}
\textbf{Strategy} & OL & CL & OL & CL & OL & CL & OL & CL\\
\midrule
Direct OE (LSTM)   & $-170.1 \pm 118.6$ & $86.0 \pm 8.9$ & $-64.9 \pm 71.3$ & $83.3 \pm 10.1$ & $45.1 \pm 32.1$ & $86.1 \pm 6.0$ & $76.0 \pm 12.3$ & $85.1 \pm 4.9$\\
CLOE (LSTM)        & $-136.4 \pm 135.5$ & $89.7 \pm 4.1$ & $-59.5 \pm 107.3$ & $89.7 \pm 5.9$ & $69.3 \pm 20.1$ & $91.3 \pm 3.3$ & $82.3 \pm 10.1$ & $90.1 \pm 4.8$\\
Direct ICI (SSM)   & $-27.6 \pm 28.2$ & $95.3 \pm 3.6$ & $68.6 \pm 10.6$ & $95.4 \pm 3.9$ & $78.9 \pm 7.0$ & \cellcolor{\bestres}$95.6 \pm 2.5$ & $83.2 \pm 6.7$ & \cellcolor{\bestres}$92.3 \pm 3.0$\\
Indirect ICI (SSM) & \cellcolor{\bestres}$64.4 \pm 15.6$ & \cellcolor{\bestres}$96.5 \pm 2.6$ & \cellcolor{\bestres}$78.7 \pm 9.2$ & \cellcolor{\bestres}$96.1 \pm 2.6$ & \cellcolor{\bestres}$86.3 \pm 5.6$ & \cellcolor{\bestres}$95.5 \pm 2.3$ & \cellcolor{\bestres}$87.0 \pm 5.0$ & \cellcolor{\bestres}$92.0 \pm 4.4$\\
\bottomrule
\end{tabular}
}
\end{center}
\end{table*}

\textbf{Results.} Figure~\ref{fig:robot_sweeps} visualizes both sweeps, with their corresponding FIT measures detailed in Table~\ref{tab:robot_summary}. The analysis yields three main observations.

\emph{(a) Informativity at low excitation.} At low excitation (e.g.,
$\sigma_r=1$) the input is dominated by the feedback loop
($\mathbf{u} \approx \cont(\mathbf{y})$), and this lack of informativity
makes the direct methods fail in open loop (FIT $-28$ for direct ICI)
despite passing the regulation test. As the excitation grows ($\sigma_r\ge5$), informativity is
restored and the gap between the methods closes.

\emph{(b) Bias of direct methods under colored noise.} Feedback injects
noise into the control input via $\cont(\mathbf v)$, so the regressor loses
orthogonality with the residual and direct identification is
biased (Drawback~\ref{item:D2})~\cite{Forssell99Revisited}. In
Figure~\ref{fig:robot_sweeps} (right, $\sigma_r=10$), as $\sigma_v$ grows
from $0.1$ to $2$ the open-loop FIT of direct ICI collapses
($82.9\!\to\!-7.0$) while indirect ICI degrades gracefully
($86.7\!\to\!60.9$), its regressor $\mathbf r$ being noise-independent by
construction. We examine the Pearson correlation between the model's simulation error and the actual measurement noise. For the $\mathbf u$-driven models, this correlation is significant ($0.28$ for direct ICI, $0.21$ for Direct OE), confirming that they erroneously absorb the noise into the learned dynamics. 

\begin{figure}[t]
\centering
\includegraphics[width=1\linewidth]{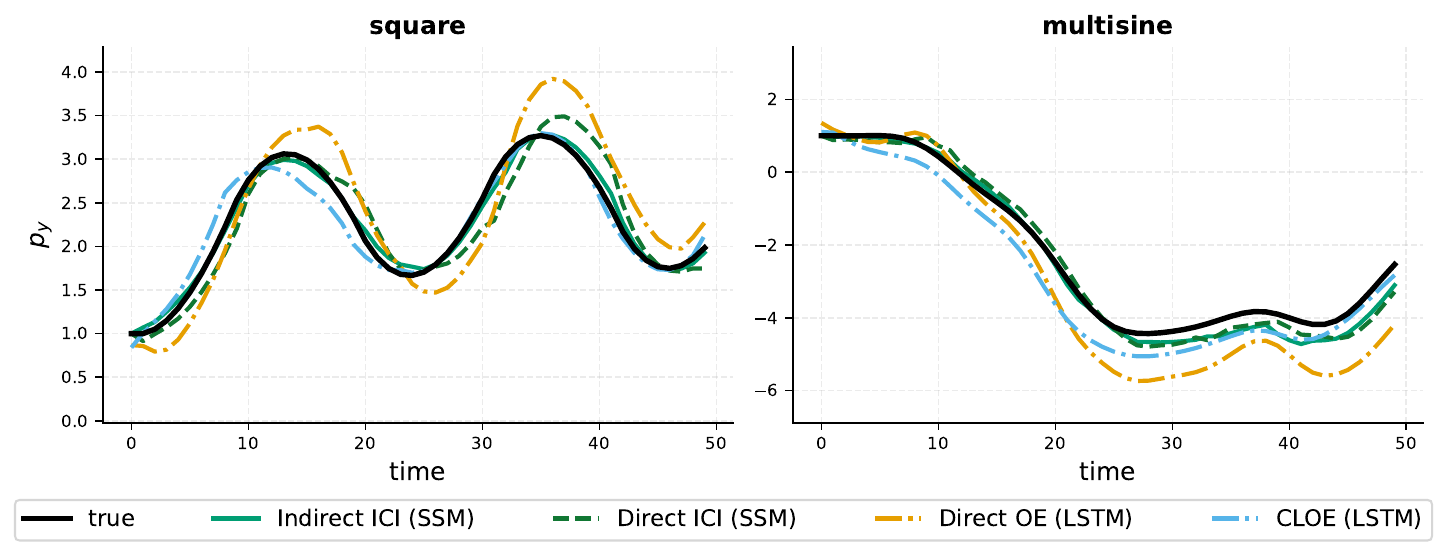}
\caption{Out-of-distribution excitation waveforms (square, multisine) at
the training amplitude; training cell $\sigma_r=20$, $\sigma_v=1$, AR(1);
channel $p_y$ (the analogous $p_x$ figure and all other training cells are
in the repository). Indirect ICI transfers to the unseen shapes more
accurately.}
\label{fig:robot_waveforms}
\end{figure}

\emph{(c) Generalization to out-of-distribution excitation.} To probe
whether the true open-loop dynamics were captured, the identified models are
evaluated on unseen input waveforms (square wave, multisine) at the
training amplitude (Figure~\ref{fig:robot_waveforms}). The indirect ICI model tracks the true response; the
direct methods and the generic baselines exhibit drift and unphysical
oscillations.

\begin{figure}[t]
\centering
\includegraphics[width=1\linewidth]{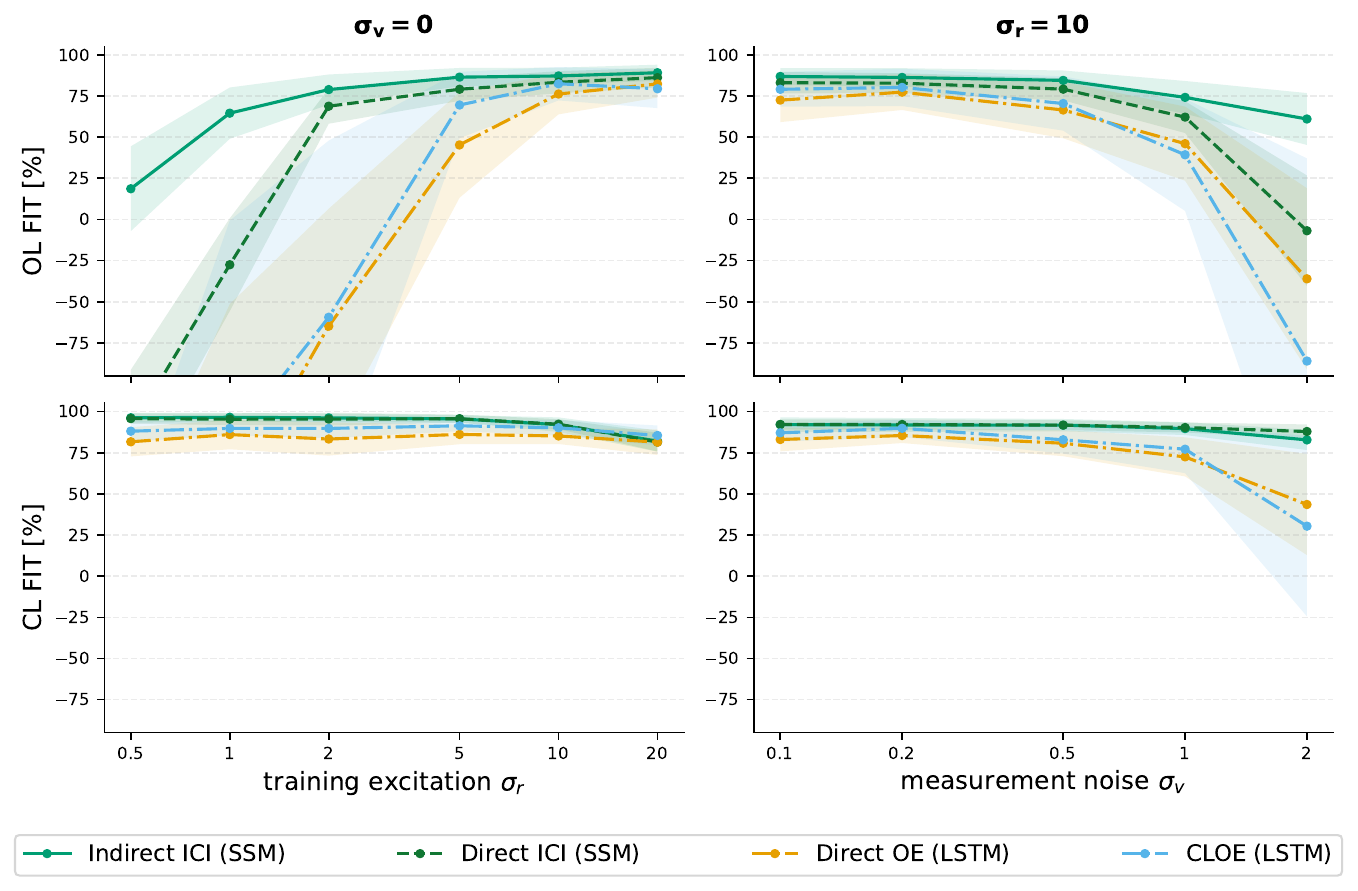}
\caption{Point-mass robot: open-loop (top) / closed-loop (bottom) FIT,
mean$\pm$std over $100$ test trajectories. Left: noise-free training vs.\
excitation $\sigma_r$ (observation~(a)); right: AR(1) noise sweep at
$\sigma_r=10$ (observation~(b)).}
\label{fig:robot_sweeps}
\end{figure}

\subsection{Generalization across environments via context-aware ICI}
\label{subsec:exp_context}

This experiment evaluates the context-enriched parameterization of Section~\ref{sec:context_ici} across a family of environments. We consider the stable point-mass robot of Section~\ref{subsec:setup} with
full-state measurements $y_t = x_t + v_t$ and the proportional controller
on the position components; unlike Section~\ref{subsec:exp_robot}, the
robot now operates in a family of environments with position-dependent
friction: in environment $e$, the nominal friction force is scaled by
$s_e(p) = 1 + \sum_{i=1}^3 w_i \exp\bigl(-\|p - c_i\|_2^2/(2\sigma_i^2)\bigr)$,
a superposition of random ``sticky patches'' with intensities
$w_i \sim \mathcal{U}(2,8)$, radii $\sigma_i \sim \mathcal{U}(2,4)$, and
centers $c_i \sim \mathcal{U}([-9,9]^2)$. Two of these fields are shown in
Figure~\ref{fig:context_fields}: the test trajectories repeatedly traverse the
patches, so the field shapes the dynamics along the whole data record.
 
Two indirect-ICI models share the same certified deep SSM core, with and without access to the local drag scale as context. The contextual model instantiates~\eqref{eq:ctx_split}, with the deep SSM acting as the stable feature generator $\approxOp_p$:
\begin{itemize}
    \item \emph{SSM-no-context} receives $\excMat$ without terrain information;
    \item \emph{SSM-context} uses a context-only mixer producing $A_t\in\R^{\dimOut\times d_f}$, applied to the core features and rescaled so that $\Vert A_t\Vert_F\leq1$. This also bounds its induced Euclidean norm, as required by Section~\ref{sec:context_ici}.
\end{itemize}
During training, the context sequence is refreshed once per epoch by evaluating the terrain map at the model's predicted positions. During deployment, it is recomputed sequentially as $\hat\zeta_t=s_e(\hat p_t)$, where $\hat p_t$ is the position component of the predicted output. The mixer uses context with a one-step delay, so $\hat\zeta_t$ affects predictions only from step $t+1$, preserving strict causality. The terrain map is assumed known, but no measured or recorded test trajectory is used to construct context.

Two training regimes share a budget of $60$ trajectories: for Case~A the training dataset contains trajectories collected on a single field, while in Case~B the training trajectories are collected from $12$ fields; both are tested on the training fields and on $6$ held-out unseen fields.

\textbf{Results.} 
The open-loop identification accuracy is illustrated in Figures~\ref{fig:context_candles}--\ref{fig:context_channels}, with comprehensive per-environment tables available in the public repository. Bound~\eqref{eq:ctx_bound}, together with~\ref{item:T1}, certifies stability of the identified closed-loop models. All evaluated closed-loop rollouts remained bounded across models, seeds, and fields. From these experiments, two primary conclusions emerge:
 
\emph{(a) Context distinguishes environment-specific responses.} The initial state and excitation $(x_0,\excMat)$ do not specify the environment, so a predictor receiving only these inputs must fit a common response across fields. Under family training, the no-context model attains similar open-loop FIT on seen and unseen fields ($42.4\%$ and $42.2\%$). Increasing its parameter count from $2\,793$ to $15\times$ ($42\,249$) does not improve unseen-field FIT ($36.9$--$40.7\%$), consistent with a limitation due to missing environment information. Figure~\ref{fig:context_channels} shows the resulting prediction errors inside the strong-friction patches.

\emph{(b) Context enables generalization to unseen fields without retraining.} Under family training, the contextual model reaches $86.6\pm6.7\%$ unseen-field FIT against $42.2\pm24.5\%$ without context, an improvement of $44.4$ percentage points with a smaller reported standard deviation. Its unseen-field score is close to the seen-field score ($86.6\%$ versus $87.0\%$), showing generalization to the tested unseen environments. This transfer uses the same learned parameters, with each new terrain map available for context evaluation.

\begin{figure}[t]
\centering
\includegraphics[width=.82\linewidth]{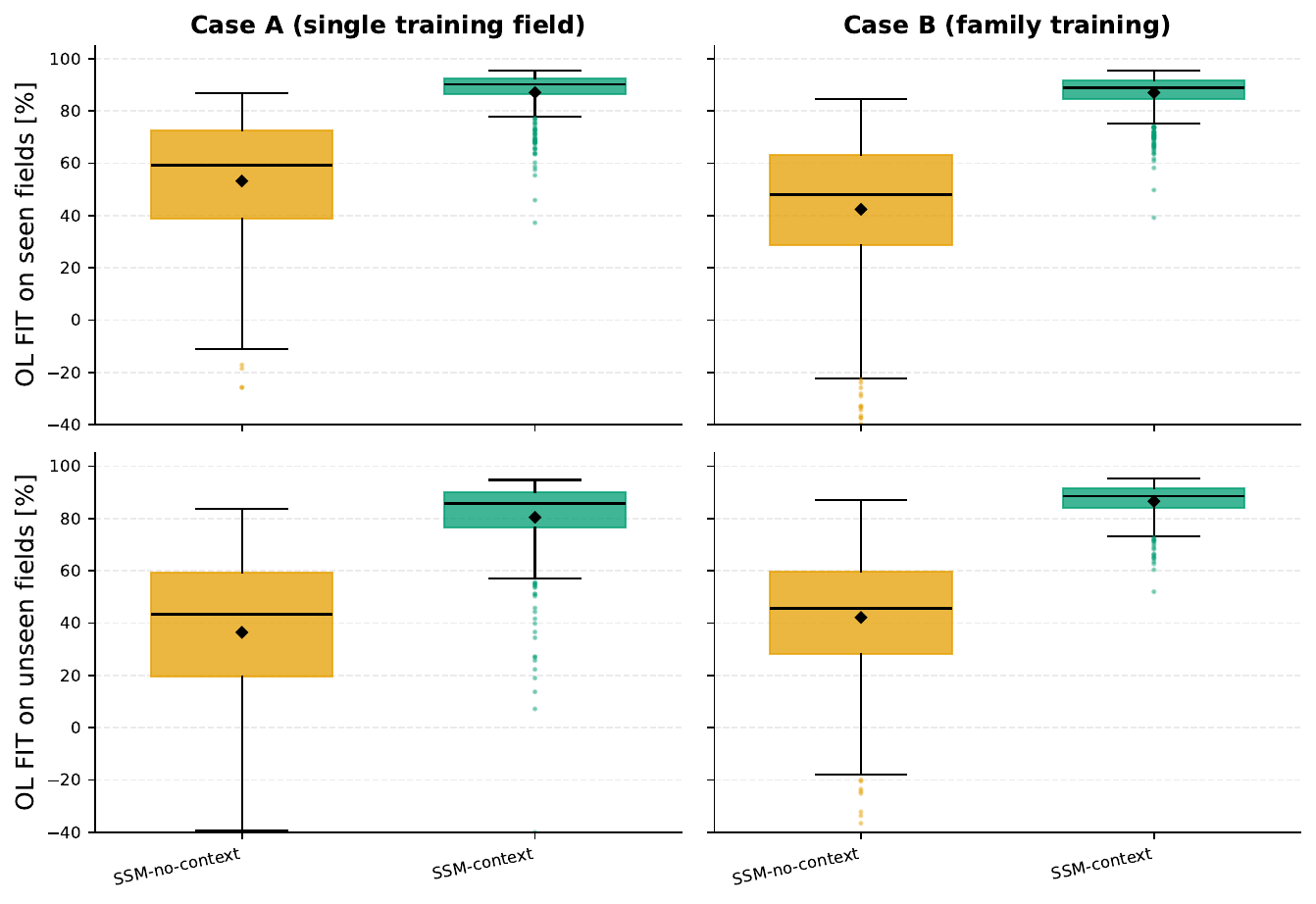}
\caption{Distribution of the open-loop per-trajectory FIT for the two models (box $=$ median and interquartile range, whiskers $=$ Tukey, $\diamond=$ mean; all seeds and test trajectories pooled). Columns vary the training regime (Case A, a single field~/~Case B, a family of fields) and rows the test set (seen
fields, top~/~unseen fields, bottom); the $y$-axis is FIT in percent. The contextual model dominates the uncontextualized one in every panel, and under family training its seen- and unseen-field distributions nearly coincide.}
\label{fig:context_candles}
\end{figure}

\begin{figure}[t]
\centering
\includegraphics[width=.94\linewidth]{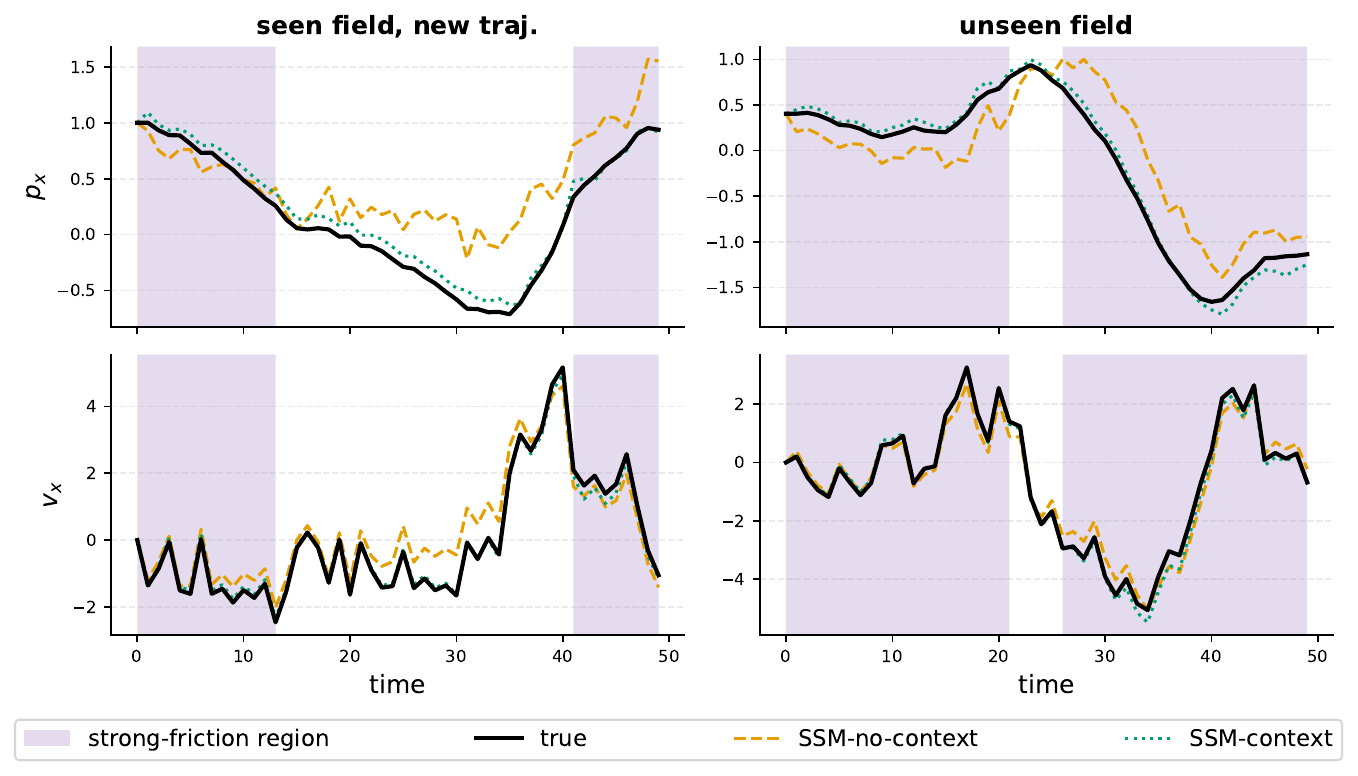}
\caption{Case~B, open-loop response on a seen (left) and unseen (right) field: $p_x$ (top), $v_x$ (bottom); shaded $=$ strong friction ($s_e>2.5$). The no-context prediction departs from the truth inside the patches, while the contextual model tracks it; the analogous $p_y$/$v_y$ figures are in the public repository.}
\label{fig:context_channels}
\end{figure}

\begin{figure}[t]
\centering
\includegraphics[width=1\linewidth]{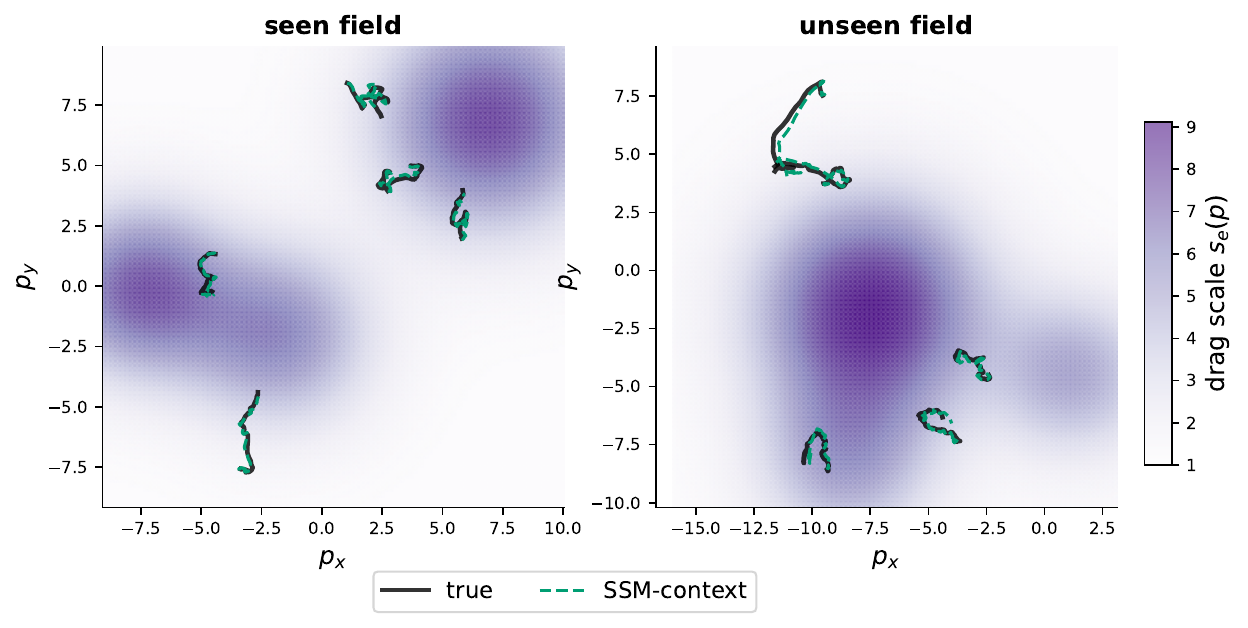}
\caption{Case B: true open-loop test trajectories (black) and SSM-context rollouts (dashed) over the friction field $s_e(p)$ (darker $=$ stickier), on
a seen (left) and unseen (right) field, under the causal predicted-position
context. Dots mark the initial positions.}
\label{fig:context_fields}
\end{figure}

\subsection{PAC-Bayesian Identification under Data Scarcity}
\label{subsec:exp_pacbayes}

This experiment considers the regime in which the PAC-Bayesian framework of Section~\ref{sec:PAC} is most useful, namely scarce and noisy data with no validation set, where ERM does not merely add variance but memorizes the measurement noise. 

On the point-mass robot example, we demonstrate three key properties: (a)~under data scarcity ERM overfits and destroys a good prior, whereas the SVGD approximation of the Gibbs posterior over the same ICI parameterization generalizes and preserves it; (b)~Theorem~\ref{theo:pac-id} provides a non-vacuous generalization certificate, spending no held-out data, that tightens with the sample size $N$; and (c)~this certificate requires only an informative prior, which may be supplied by cross-domain transfer or, in the absence of a source system, obtained from data split.

In this experiment, we apply the transformation in~\eqref{eq:loss_trans} to the identification loss, setting $C=1$ and $\gamma=10$: every risk bound then takes values in $[0,1)$ and is vacuous as soon as it reaches $1$. 
The identified operator and every posterior particle are contractive RENs, certified closed-loop stable by Theorem~\ref{theo:ici}. Four estimators share the same parameterization: the un-adapted prior mean, ERM from scratch, ERM fine-tuned from the prior and one particle drawn uniformly from the SVGD approximation of the Gibbs posterior~\eqref{eq:qstar-id} (Section~\ref{subsec:numerical_SVGD}). The prior parameters $\theta_{\mathrm{pre}}$ are obtained by ERM pretraining on~\eqref{eq:indirect} using trajectories from a nominal plant with uniform linear drag; at deployment the plant has drifted to a heavier drag, modeling wear, and must be re-identified from only $N \in \{1, \dots, 64\}$ noisy trajectories, with no validation set. Being trained on a different plant, $\theta_{\mathrm{pre}}$ is independent of the target sample, so Theorem~\ref{theo:pac-id} applies with no data splitting.
The prior distribution is the per-parameter Gaussian
$\mathcal N\big(\theta_{\mathrm{pre},i},(\alpha|\theta_{\mathrm{pre},i}|+10^{-3})^2\big)$.

\textbf{Results.} Figures~\ref{fig:pac_bound}--\ref{fig:pac_fit} report three complementary quantities. Figure~\ref{fig:pac_bound} (left) shows the PAC-Bayes certificate~\eqref{eq:bound_qstar-id} of a single draw from the exact Gibbs posterior $\Q^*$; Figure~\ref{fig:pac_bound} (right) shows the empirical risk~\eqref{eq:empirical_loss} computed on $c$ rather than on the transformed loss $\Tilde{c}$~\eqref{eq:loss_trans}, against noisy outputs; and Figure~\ref{fig:pac_fit} shows deployment FIT against the noise-free plant. Only the first quantity is certified by Theorem~\ref{theo:pac-id}; the latter two are complementary empirical diagnostics. The complete numerical table is provided in the repository. The analysis yields three main observations.

\emph{(a) At scarce data ERM memorizes the noise; the SVGD estimator does not.} For $N \le 4$ both ERM models drive their training MSE below the
noise floor $\sigma_v^2$, fitting the specific realization, while their
test error explodes and their deployment FIT drops \emph{below} the initial pre-trained prior ($67.7\%$ vs.\ $86.6\%$ at $N=4$;
Figure~\ref{fig:pac_fit}). Fine-tuning from the good prior overfits
just the same, isolating the benefit as the KL anchoring
of~\eqref{eq:def_qstar} rather than a good initialization: the SVGD estimator, optimizing the same trajectories, keeps its training error at the
floor and its FIT at the prior's level, and is unaffected even at $N=1$.

\emph{(b) The SVGD estimator performs well at scarce data; the Gibbs bound becomes non-vacuous.} Across $N$ (Figure~\ref{fig:pac_fit}) the posterior
holds an almost constant ${\approx}85\%$ FIT, whereas the ERM starts far below
the prior, catches up around $N=8$, and overtakes only as the sample size grows. This plateau is the price of anchoring to the prior, and it is
exactly what protects the posterior where unconstrained fitting ruins ERM.
The certificate~\eqref{eq:bound_qstar-id} behaves consistently
(Figure~\ref{fig:pac_bound}): vacuous at the smallest $N$, it becomes
informative from $N=4$ and tightens with more data, at the $\sqrt N$ rate of
$\lambda^\ast$; the point where it stops being vacuous thus marks the onset
of the useful regime. The bound is fully computed from data (no oracle
term) and it holds with probability $\ge 1-\delta$ ($\delta=0.1$). Every SVGD particle is certified closed-loop stable by Theorem~\ref{theo:ici}, whereas the loss certificate applies to an exact draw from $\Q^*$. The approximation error for the deployed particle is not covered (Section~\ref{subsec:numerical_SVGD}). Neither guarantee uses held-out data.

The prior width $\alpha$ trades deployment fit against certification: with a
wide prior (Figure~\ref{fig:pac_fit}, pink) the posterior detaches from
$\theta_{\mathrm{pre}}$ and reaches a higher FIT, but a wide prior rarely
samples low-loss models, so the bound turns vacuous, which is precisely why the selected configuration keeps $\alpha$ small.

\emph{(c) The informative prior can be built without a source system.} When
no related plant is available for transfer, an equally good prior is
obtained from the target data alone, by splitting it in two
(Prop.~2 of~\cite{PACSNOC}): a short ERM run on one half provides the prior
center, and the posterior and its certificate are computed on the other
half. This route reaches a comparable FIT (${\approx}93\%$) with a valid
certificate (Figure~\ref{fig:pac_bound}), at the cost of needing more data
than transfer, since half the sample is spent on the prior. Either way, an
informative prior is essential: a non-informative one forfeits the
certificate.

\begin{figure}[t]
\centering
\begin{minipage}{0.5\linewidth}
\centering
\includegraphics[width=\linewidth]{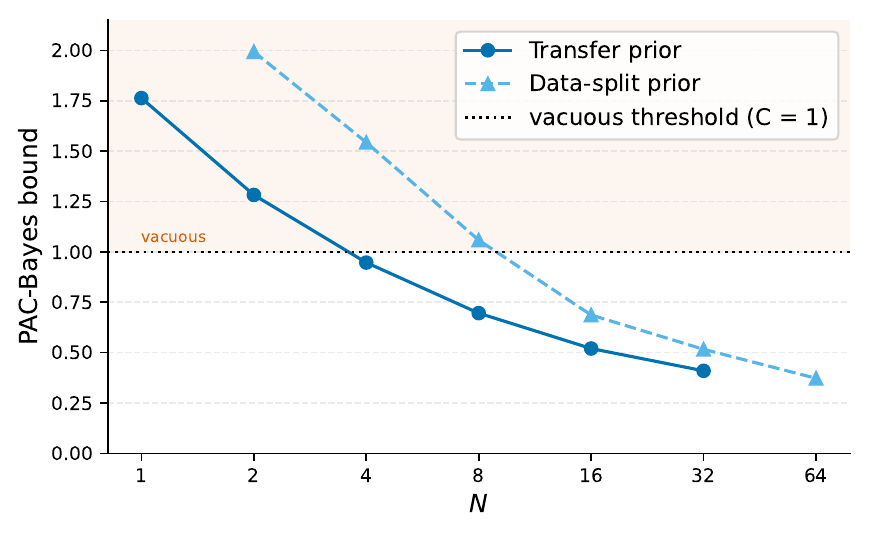}
\end{minipage}\hfill
\begin{minipage}{0.5\linewidth}
\centering
\includegraphics[width=\linewidth]{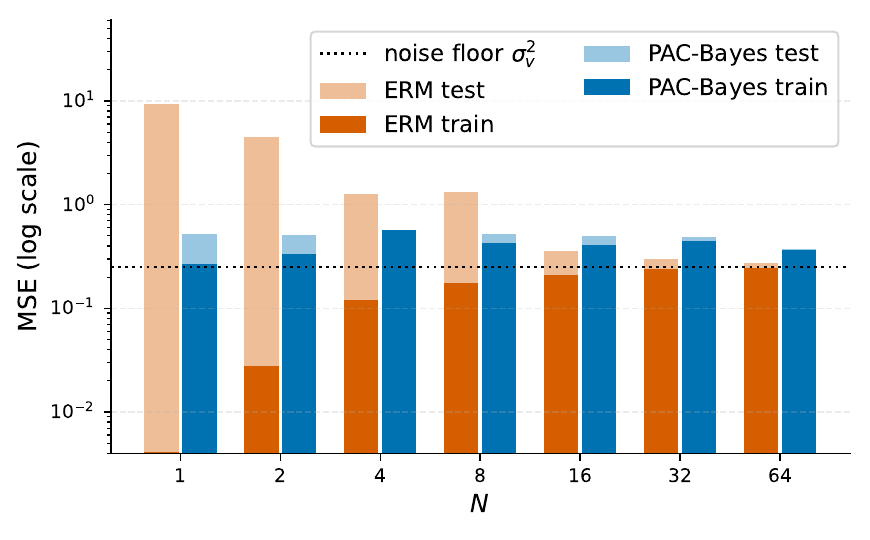}
\end{minipage}

\caption{Certificate and overfitting diagnostics vs.\ sample size $N$.
\emph{Left:} the~\eqref{eq:bound_qstar-id} bound for the transfer prior
(blue) and the transfer-free data-split prior (light blue); the shaded band
marks the vacuous region ($C=1$). \emph{Right:} train (dark) vs.\ test
(light) empirical risk~\eqref{eq:empirical_loss} built on
the untransformed loss~\eqref{eq:mse_loss} for the ERM
(orange) and PAC-Bayes (blue) models, against noisy outputs (log scale).
The left panel bounds the true risk in~\eqref{eq:true_loss} for a draw from $\Q^*$; the right is an empirical diagnostic.
The MSE noise floor is $\sigma_v^2=0.25$ (dotted), and training below it indicates noise memorization.}

\label{fig:pac_bound}
\end{figure}

\begin{remark}[Estimating the log-partition]\label{rem:Z}
The certificate~\eqref{eq:bound_qstar-id} requires
$Z_\lambda = \mathbb{E}_{\theta\sim\P}\,e^{-\lambda \hat J(\theta,\dataset_T^N)}$,
an expectation under the prior with no closed form. We estimate it by Monte
Carlo, drawing $M = 1024$ i.i.d.\ prior samples $\theta_j\sim\P$ and averaging
$e^{-\lambda \hat J(\theta_j,\dataset_T^N)}$. Since the integrand lies in
$(0,1]$, a one-sided Hoeffding bound on this average controls the estimation
error, so that $-\tfrac1\lambda\ln Z_\lambda$ is replaced by a
\emph{high-probability upper bound}; the resulting certificate remains valid,
now jointly over the sample, the posterior draw, and the Monte-Carlo draws.

\end{remark}

\begin{figure}[t]
\centering
\includegraphics[width=1\linewidth]{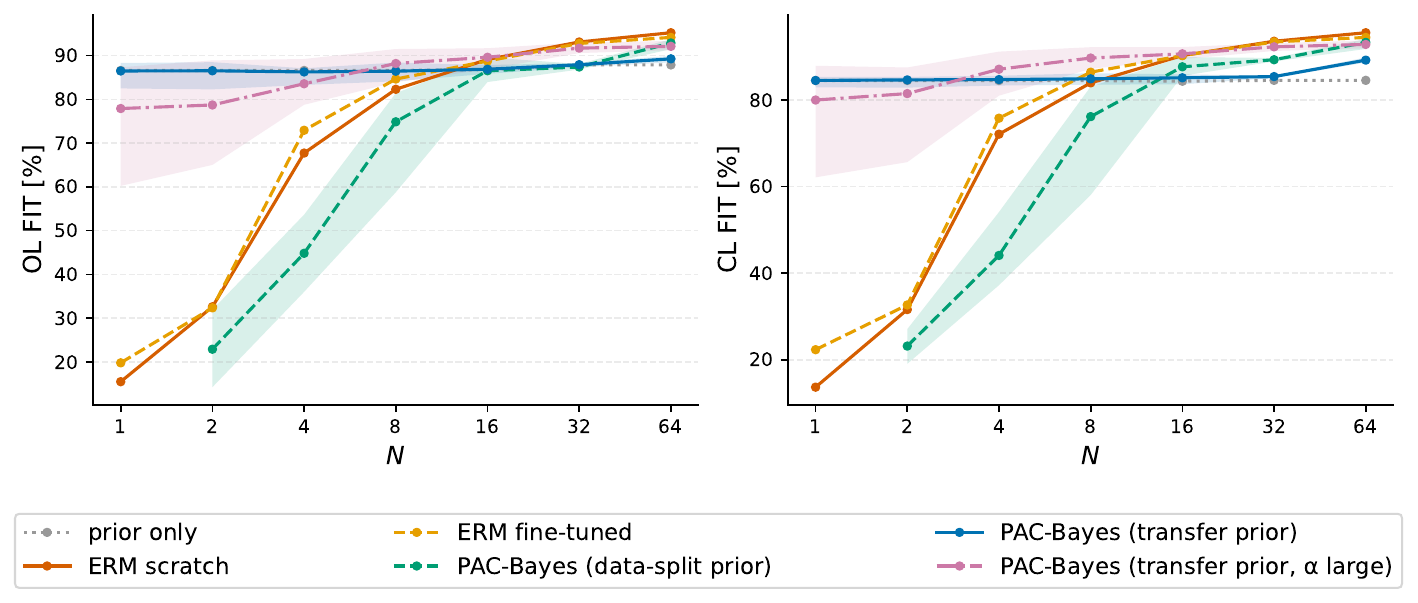}
\caption{Deployment FIT vs.\ $N$ (mean over seeds): left, open loop; right, closed loop;
against the noise-free plant. Key comparisons: 1) ERM methods (orange/gold) overfit to noise under data scarcity, degrading performance below the baseline prior. 2) The \emph{SVGD estimator} (blue), anchored to the \emph{transfer prior}, resists overfitting and preserves the prior's quality at scarce $N$. ERM overtakes it only at larger sample sizes. 3) Relying solely on target data, the \emph{data-split prior} approach (green) adapts well and maintains a valid certificate. 4) Using a \emph{wide prior} (pink) yields high FIT but results in a vacuous certificate.}
\label{fig:pac_fit}
\end{figure}

\section{Conclusions and future work}
This paper presented a computationally tractable framework for closed-loop
identification of nonlinear systems.
By leveraging a complete parameterization of all systems stabilized by a given controller, the proposed Internal Controller Identification (ICI) framework guarantees closed-loop stability of the identified model by construction. Moreover, this approach ensures that training remains well posed even for open-loop unstable plants, and reduces to a consistent open-loop problem in
the linear case.

Our numerical experiments validate these theoretical claims across several distinct operating regimes.
For open-loop unstable plants, the certified models successfully captured the unstable dynamics without ever destabilizing the loop, avoiding the divergence seen in uncertified baselines. On a partially observed robotic system, the indirect criterion proved to be the most accurate under conditions of weak excitation and feedback-correlated noise.
We also showed that bounded context wrappers allow a single certified model to identify an entire family of environments, generalizing even to members unseen during training. Finally, in data-scarce regimes, the PAC-Bayesian extension prevented the noise memorization that typically degrades ERM, while providing non-vacuous certificates without requiring held-out data.

Future work will focus on applying these identified models to adaptive control schemes and designing an iterative procedure that seamlessly integrates system identification with controller synthesis.

\bibliographystyle{abbrv}        %
\bibliography{root}           %

\end{document}